%
%
%
%
%
\RequirePackage{fix-cm}
\documentclass[twocolumn]{svjour3}          
\smartqed  
\usepackage{graphicx}
%
%
\usepackage[cmex10]{amsmath}
\usepackage{setspace}
\usepackage{comment}
\usepackage{placeins}
\usepackage{afterpage}
\usepackage[utf8]{inputenc}
\usepackage{graphicx}
\usepackage{amssymb}
\usepackage{cite}
\usepackage{color}
\usepackage{caption}

\usepackage{amsmath}
\usepackage{tabu}
\usepackage{array}
\usepackage{booktabs}
\usepackage{amssymb}
\usepackage{threeparttable}
\usepackage[linesnumbered,ruled]{algorithm2e}
\usepackage{float}
\usepackage{stfloats}
\usepackage{comment}
\usepackage{lipsum}
\usepackage[english]{babel}
\usepackage[T1]{fontenc}
\usepackage{mwe}    
\usepackage{subfig}


\let\oldnl\nl
\newcommand{\nonl}{\renewcommand{\nl}{\let\nl\oldnl}}
\newtheorem{defn}{Definition}
\newtheorem{exmp}{Example} 
%

\newtheorem{thm}{Theorem}

\newtheorem{prop}{Property}

\begin{document}

\title{Supporting Secure Dynamic Alert Zones Using Searchable Encryption and Graph Embedding
}


\author{Sina Shaham         \and
        Gabriel Ghinita     \and 
        Cyrus Shahabi
}


\institute{S. Shaham \at
              Department of Computer Science\\ University of Southern California\\
              Los Angeles, CA, USA\\
              \email{sshaham@usc.edu}           
        \and
            G. Ghinita \at
              College of Science and Engineering\\ Hamad Bin Khalifa University\\
               Qatar Foundation, Doha, Qatar\\
              \email{gghinita@hbku.edu.qa}      
           \and
            C. Shahabi \at
              Department of Computer Science\\ University of Southern California\\
              Los Angeles, CA, USA\\
              \email{shahabi@usc.edu}    
}

\date{Received: date / Accepted: date}

\maketitle

\begin{abstract}
Location-based alerts have gained increasing popularity in recent years, whether in the context of healthcare (e.g., COVID-19 contact tracing), marketing (e.g., location-based advertising), or public safety.
However, serious privacy concerns arise when location data are used in clear in the process. Several solutions employ {\em Searchable Encryption (SE)} to achieve {\em secure} alerts directly on encrypted locations. While doing so preserves privacy, the performance overhead incurred is high. We focus on a prominent SE technique in the public-key setting -- {\em Hidden Vector Encryption (HVE)}, and propose a graph embedding technique to encode location data in a way that significantly boosts the performance of processing on ciphertexts. We show that finding the optimal encoding is NP-hard, and provide several heuristics that are fast and obtain significant performance gains. Furthermore, we investigate the more challenging case of dynamic alert zones, where the area of interest changes over time. Our extensive experimental evaluation shows that our solutions can significantly improve computational overhead compared to existing baselines.
\keywords{Hidden Vector Encryption \and Secure Alert Zones  \and Graph Embedding.}
\end{abstract}

\section{Introduction}\label{Introduction}

Location data play an important part in offering customized services to mobile users. Whether they are used to find nearby points of interest, to offer location-based recommendations, or to locate friends situated in proximity to each other, location data significantly enrich the type of interactions between users and their favorite services. However, current service providers collect location data in clear, and often share it with third parties, compromising users' privacy. Movement data can disclose sensitive details about an individual's health status, political orientation, alternative lifestyles, etc. Hence, it is important to support such location-based interactions while protecting privacy.

Our focus is on {\em secure alert zones}, a type of location-based service where users report their locations in encrypted form to a service provider, and then they receive alerts when an event of interest occurs in their proximity. This operation is very relevant to contact tracing, which is proving to be essential in controlling pandemics, e.g., COVID-19. It is important to determine if a mobile user came in close proximity to an infected person, or to a surface that has been exposed to the virus, but at the same time one must prevent against intrusive surveillance of the population. More applications of alert zones include public safety notifications (e.g., active shooter), and commercial applications (e.g., notifying mobile users of nearby sales events).

{\em Searchable Encryption (SE)} \cite{song2000,boneh2007conjunctive,HXT18} is very suitable for implementing secure alert zones. Users encrypt their location before sending it to the service provider using a special kind of encryption, which allows the evaluation of predicates directly on ciphertexts. However, the underlying encryption functions are not specifically designed for geospatial queries, but for arbitrary keyword or range queries. As a result, a data mapping step is typically performed to transform spatial queries to the primitive operations supported on ciphertexts. Due to this translation, the performance overhead can be significant. Some solutions use {\em Symmetric Searchable Encryption (SSE)} \cite{song2000,curtmola2011searchable,HXT18}, where a trusted entity knows the secret key of the transformation, and collects the location of all users before encrypting them and sending the ciphertext to the service provider. While the performance of SSE can be quite good, the system model that requires mobile users to share their clear text locations with a trusted service is not adequate from a privacy perspective, since it still incurs a significant amount of disclosure.

To address the shortcomings of SSE models, the work in \cite{boneh2007conjunctive} introduced the novel concept of {\em Hidden Vector Encryption (HVE)}, which is an {\em asymmetric} type of encryption that allows direct evaluation of predicates on top of ciphertext. Each user encrypts her own location using the {\em public} key of the transformation, and no trusted component that accesses locations in clear is required. This approach has been considered in the location context in~\cite{ghinita2014efficient},~\cite{nguyen2019privacy}, with encouraging results. However, the performance overhead of HVE in the spatial domain remains high. Motivated by this fact, we study techniques to reduce the computational overhead of HVE. Specifically, we derive special types of spatial data mapping using graph embeddings, which allow us to express spatial queries with predicates that are less computationally-intensive to evaluate.

In existing HVE work for geospatial data~\cite{ghinita2014efficient},~\cite{nguyen2019privacy}, the data domain is partitioned into a hierarchical data structure, and each node in this structure is assigned a binary string identifier. The binary representation of each node plays an important part in the query encoding, and it influences the amount of computation that needs to be executed when evaluating predicates on ciphertexts. However, the impact of the specific encoding is not evaluated in-depth. Our approach embeds the geospatial data domain to a high-dimensional hypercube, and then it applies graph embedding \cite{chandrasekharam1994genetic} techniques that directly target the reduction of computation overhead in the predicate evaluation step. Finally, no existing work considers the case of alert zones that change over time. Support for dynamic alert zones is very important, given that in most use case scenarios, phenomena of interest evolve over time (e.g., places visited by COVID carriers, area affected by a gas leak, etc). Our work tackles this important challenge\footnote{This submission is an extended version of the work in \cite{dbsec}. Additional contributions include: a complexity analysis of Gray-based approaches (Section~\ref{Sec: Complexity Analysis}); novel techniques for supporting dynamic alert zones~(Section~\ref{Advance Modeling of Alert Zones}); and an updated experimental section, including empirical evaluation of dynamic alert zones algorithms.}.

Our specific contributions are:

\begin{itemize}
 
\item We introduce a novel transformation of the spatial data domain based on graph embedding that is able to model accurately the performance overhead incurred when running HVE queries for spatial predicates;

\item We transform the problem of minimizing HVE computation to a graph problem, and show that the optimal solution is NP-hard;

\item We devise several heuristics that can solve the problem efficiently in the embedded space, while reducing significantly the computational overhead;

\item We propose models that take into account the spatial and temporal evolution of alert zones, and choose encodings that improve performance under dynamic conditions;

\item We perform an extensive experimental evaluation which shows that the proposed approaches are able to halve the performance overhead incurred by HVE when processing spatial queries.
    
\end{itemize}

The rest of the paper is organized as follows: Section 2 introduces necessary background on the system model (an HVE primer is given in Appendix~\ref{sec:app}). Section 3 provides the details of the proposed graph embedding transformation. Section 4 introduces several heuristic algorithms that solve the problem efficiently. Section 5 focuses on modeling of dynamic alert zones, and on advanced encodings under changing conditions. Section 6 evaluates thoroughly the proposed approach on real-life datasets. We survey related work in Section 7 and conclude in Section 8.

\section{Background}\label{Sec: Background}

\subsection{System Model}\label{Sec: System Model}

Consider a [0,1]$\times$[0,1] spatial data domain divided into $n$ non-overlapping partitions, denoted as 
\begin{equation}
    \mathcal{V}=\{ v_1,\,,v_2,..., v_{n} \}.
\end{equation}
We use the term {\em cell} to refer to partitions, which can have an arbitrary size and shape. An example of such a partitioning is provided in Fig.~\ref{Fig: Sample grid}. 
The system architecture of location-based alert system is represented in Fig.~\ref{fig:system model}, and consists of three types of entities:

\newcommand{\rvec}{\mathrm {\mathbf {r}}} 
\begingroup
\begin{table}
\caption {Summary of notations.} 
\centering
\begin{tabular}{>{\arraybackslash}m{2cm} >{\arraybackslash}m{5.8cm} }
\hline\hline
 Symbol  & Description\\    \hline
  $n$ & Number of cells \\
  $k$ & Length of HVE index \\
  $\mathcal{V} = \{ \bigcup v_i\}$ & Set of all cells\\
  $C_j$ & Encrypted location of user $j$\\
  $TK$ & Token $j$\\
  $M_j$ & message of user $j$\\
  $G_1(\mathcal{C}, \mathcal{E}_1)$ & $k$-cube; vertices $\mathcal{C} = \{ \bigcup c_i\}$; edges $\mathcal{E}_1$\\
 $G_2(\mathcal{V}, \mathcal{E}_2)$ & Complete graph; vertices $\mathcal{V} = \{ \bigcup v_i\}$;  edges $\mathcal{E}_2$ \\
 $(\mathcal{H}_1,\mathcal{H}_2,\mathcal{E}_3)$ & Bipartite graph; node sets $\mathcal{H}_1$ and $\mathcal{H}_2$; edges $\mathcal{E}_3$ \\
$\mathcal{D}_{i|c_j}$ & Set of nodes in $\mathcal{C}$ with Hamming distance $i$ from $c_j$ \\
$\mathcal{L}_x$ & Set of all complete $x$-bit gray cycles, $\mathcal{L}_x = \{ \bigcup l_i \}$ \\
$\mathcal{S}_n$& state space with $n$ cells\\
$Q_n$& state transition matrix with $n$ cells\\
Pois($\lambda$)& Poisson distribution; occurrence rate $\lambda$\\
$\boldsymbol{s}$& Stationary distribution vector\\
$\textbf{\underline{i}}$& state $i$\\
\hline\hline
\end{tabular}
\label{tab:table1}
\end{table}
\endgroup

\begin{enumerate}
\item 
{\bf Mobile Users} subscribe to the alert system and periodically submit encrypted location updates.
\item
The {\bf Trusted Authority (TA)} is a trusted entity that decides which are the alert zones, and creates for each zone a search {\em token} that allows to check privately if a user location falls within the alert zone or not.
\item
The {\bf Server (S)} is the provider of the alert service. It receives encrypted updates from users and search tokens from TA, and performs the predicate evaluation to decide whether encrypted location $C_i$ of user $i$ falls within alert zone $j$ represented by token $TK_j$. If the predicate holds, the server learns message $M_i$ encrypted by the user, otherwise it learns nothing.
\end{enumerate}

Table~\ref{tab:table1} summarizes the notations used throughout the manuscript.

\begin{figure*}[t]
\centering
\includegraphics[scale=.55]{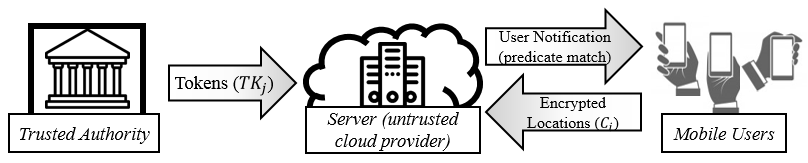}
\hspace{1em}
\centering
\caption{Location-based alert system.}
\label{fig:system model}
\end{figure*}

\begin{figure}[t]
\centering
\includegraphics[scale=.45]{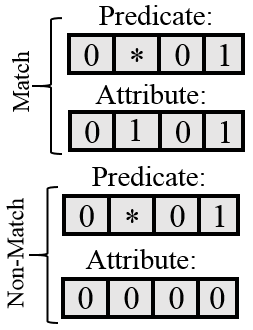}
\hspace{1em}
\centering
\caption{HVE evaluation.}
\label{fig:math-nonmatch}
\end{figure}

The system supports location-based {\em alerts}, with the following semantics: a {\em Trusted Authority (TA)} designates a subset of cells as an {\em alert zone}, and all the users enclosed by those cells must be notified. The TA can be, for instance, the Center for Disease Control (CDC), who is monitoring cases of a pandemic, and wishes to notify users who may have been affected; or, the TA can be some commercial entity that the users subscribe to, and who notifies users when a sales event occurs at selected locations.

The {\em privacy requirement} of the system dictates that the server must not learn any information about the user locations, other than what can be derived from the match outcome, i.e., whether the user is in a particular alert zone or not. In case of a successful match, the server $S$ learns that user $u$ is enclosed by zone $z$. In case of a non-match, the server $S$ learns only that the user is outside the zone $z$, but no additional location information. Note that, this model is applicable to many real-life scenarios. For instance, users wish to keep their location private most of the time, but they want to be immediately notified if they enter a zone where their personal safety may be threatened. Furthermore, the extent of alert zones is typically small compared to the entire data domain, so the fact that $S$ learns that $u$ is {\em not} within the set of alert zones does not disclose significant information about $u$'s location.
The TA can be an organization such as CDC, or a city's public emergency department, which is trusted not to compromise user privacy, but at the same time does not have the infrastructure to monitor a large user population, and outsources the service to a cloud provider.

\begin{figure*}[t]
\centering
	\subfloat[Sample grid.\label{Fig: Sample grid}]{%
	\includegraphics[scale = 0.5]{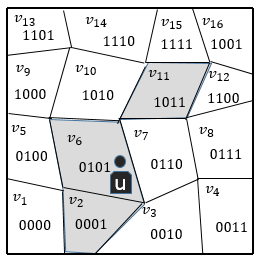}
	}	
	\subfloat[Graph $G_1(\mathcal{C}, \mathcal{E}_1)$.\label{Fig: Graph G1}]{%
		\includegraphics[scale = 0.46]{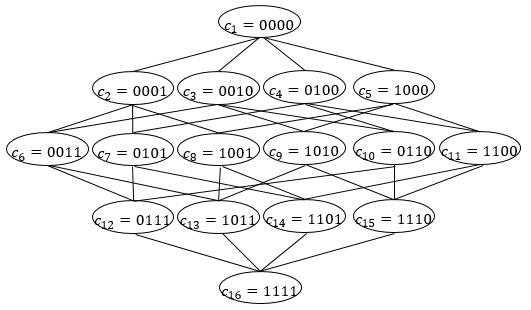}	
	}
	\subfloat[Graph $G_2(\mathcal{V}, \mathcal{E}_2)$.\label{Fig: Graph G2}]{%
		\includegraphics[scale=.5]{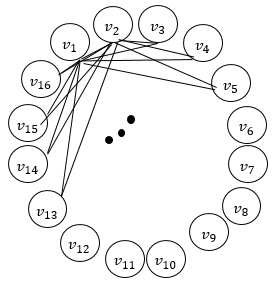}
	}
	\caption{An example of embedding graphs generated based on a sample grid.}
\end{figure*}

\subsection{Problem Statement}\label{Sec: Problem Statement}

Prior work~\cite{ghinita2014efficient,nguyen2019privacy} assumed that all cells are equally likely be in an alert zone. However, that is not the case in practice. Some parts of the data domain (e.g., denser areas of a city) are more likely to become alert zones. 
The cost of encrypted alert zone enclosure evaluation is given by the number of operations required to apply HVE matching at the service provider.
As we discuss in our HVE primer in Appendix A, the evaluation cost is directly proportional to the number of non-star bits in the tokens. 
Armed with knowledge about the likelihood of cells to be part of an alert zone, one can create superior encodings that reduce processing overhead.

Our goal is to find an enhanced encoding that reduces non-star bits for a given set of alert zone tokens. 
Denote by $p(v_i)$ the probability of cell $v_i$ being part of an alert zone. The {\em mutual} probability of multiple cells indicates how likely they are to be part of the {\em same} alert zone. Given individual cell probabilities, the mutual probability of a set of $i$ cells $\mathcal{L} = \{ v_1',\,,v_2',..., v_{i}' \}$ is calculated as:

\begin{equation}\label{base}
	p(\mathcal{L}) = \prod_{j=1}^{i} p(v'_j).
\end{equation}

The problem we study is formally presented as follows:
\begin{problem}\label{Problem: 1}
Find an encoding of the grid that on average reduces the number of non-star bits in the tokens generated from alert zone cells.
\end{problem}

In the above formulation, the correlation between cells becoming part of an alert zone is assumed to be negligible. In essence, the assumption is that cells are independent in time and space (in Section~\ref{Advance Modeling of Alert Zones},  we provide an advanced modeling of the correlation of alert zones over space and time). 


\section{Location Domain Mapping through Graph Embedding} \label{sec:ProblemEmbedding}

Our approach minimizes the number of non-star bits in alert zone tokens by modeling the data domain partitioning as an embedding problem of a $k$-cube onto a complete graph. We denote a $k$-cube as $G_1(\mathcal{C}, \mathcal{E}_1)$, where $\mathcal{C}=\{ c_1,\,,c_2,..., c_{n} \}$ and $c_i=\{0,1\}^k$. Fig.~\ref{Fig: Graph G1} illustrates a $k$-cube generated based on the sample partitioning in Fig~\ref{Fig: Sample grid}. In $G_1$, two nodes $c_i$ and $c_j$ are connected if their \textit{Hamming distance} is equal to one. We refer to such a bit as \textit{Hamming bit}.
\begin{defn}{(Hamming Distance and Bits).}\label{Hamming distance and bits}
	The Hamming distance between two indices $c_i$ and $c_j$ in $G_1(\mathcal{C}, \mathcal{E}_1)$ is the minimum number of substitutions required to transform $c_i$ to $c_j$, denoted by the function $d_h(.)$. We refer to the bits that need to be substituted as the {\em Hamming bits} of the indices. 
\end{defn}

\begin{exmp}
The Hamming distance between indices $c_1=0100$ and $c_2 = 0010$ is two ($d_h(c_i,c_j)=2$), and the Hamming bits are the second and third most significant bits of the indices. 
\end{exmp}

The second graph required to formulate the problem of minimizing the number of non-stars is a complete graph generated by all cells in the partitioning, denoted by $G_2(\mathcal{V}, \mathcal{E}_2)$. The set $\mathcal{V}$ represents the nodes corresponding to cells, and an undirected edge connects every two nodes in $G_2$.

Note that, every token (including those containing stars), can be related to several cycles on the $k$-cube. For example, token 00** represents four indices 0000, 0001, 0010, 0011, which correspond to cycles $(c_1,c_2,c_6,c_3)$ and $(c_1,c_3,c_6,c_2)$ on the $k$-cube in Fig.~\ref{Fig: Graph G1}. Unfortunately, there is no one-to-one correspondence between the tokens and the cycles. In particular, for a larger number of stars, there exist several cycles representing the same token. To generate a one-to-one correspondence, we incorporate {\em Binary-Reflected Gray (BRG)} encoding on the $k$-cube to create unique cycles corresponding to tokens.

\begin{defn}(BRG path on $k$-cube).\label{Defn: gray path}
	A BRG path between two nodes with non-zero Hamming distance is defined as the path on the $k$-cube going from one node to another based on BRG coding on Hamming bits.
\end{defn}

As an example, the Hamming bits between 0001 and 1000 are the least and most significant bits, and the BRG path connecting them on the $k$-cube in Fig.\ref{Fig: Graph G1} includes indices 0001, 1001, and 1000 in the given order. One can see that as the BRG codes are unique, the BRG path between two indices on the $k$-cube is also unique. This characteristic of BRG paths is formulated in Lemma~\ref{Thm: gray path lemma}.

\begin{lemma}\label{Thm: gray path lemma}
A BRG path between two nodes on a $k$-cube is unique.
\end{lemma}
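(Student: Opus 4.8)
The plan is to show uniqueness by induction on the Hamming distance $d_h(c_i, c_j)$ between the two endpoints, using the recursive structure of Binary-Reflected Gray codes. First I would set up notation: let $c_i$ and $c_j$ be two nodes on the $k$-cube with Hamming distance $m$, and let $b_1 < b_2 < \dots < b_m$ be the positions of the Hamming bits (the bit positions where $c_i$ and $c_j$ differ), ordered by significance. By Definition~\ref{Defn: gray path}, the BRG path is obtained by flipping these bits in the order dictated by the $m$-bit Binary-Reflected Gray code sequence applied to the Hamming bits, while all non-Hamming bits remain fixed throughout. So the claim reduces to the statement that the standard $m$-bit BRG sequence from the all-zeros codeword to the all-ones codeword (or more precisely, the sequence realizing the required bit flips) is itself unique as an ordered walk.

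The key steps, in order, are: (1) observe that the non-Hamming coordinates never change along a BRG path, so the path lives entirely in the $m$-dimensional subcube spanned by the Hamming bits, and it suffices to prove uniqueness there; (2) recall the recursive definition of the $m$-bit BRG code --- the sequence for $m$ bits is built from two reflected copies of the $(m{-}1)$-bit sequence, with the most significant Hamming bit set to $0$ in the first half and $1$ in the second half, flipping that bit exactly once at the midpoint; (3) run the induction: the base case $m=1$ is immediate (the only path is the single edge flipping $b_1$), and for the inductive step note that the position of the unique flip of the most significant Hamming bit $b_m$ is forced by the BRG construction, which splits the path into two sub-paths each of which is a BRG path on the remaining $m{-}1$ Hamming bits between determined endpoints, hence unique by the inductive hypothesis; (4) conclude that the concatenation is the unique BRG path, and since the whole construction is deterministic given the endpoints, the BRG path on the $k$-cube is unique.

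I would also make explicit why this is well-defined as a path on the $k$-cube at all: consecutive codewords in a BRG sequence differ in exactly one bit, so each step of the sequence corresponds to traversing an actual edge of $G_1(\mathcal{C}, \mathcal{E}_1)$, and the path visits $m+1$ distinct nodes (Gray codes are cyclic and in particular the prefix from start to end has no repeats), so it is a genuine simple path. This handles the objection that a ``path'' might accidentally revisit a node.

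The main obstacle I anticipate is not deep but bookkeeping: one must argue carefully that ``the BRG path'' as given by Definition~\ref{Defn: gray path} has no ambiguity in how the reflection is applied --- i.e., that there is a single canonical BRG ordering of flips once the set of Hamming bits is fixed and ordered by significance --- and that the recursive split in step (3) genuinely produces BRG paths on the sub-cube with the correct fixed endpoints, rather than some other Gray-code ordering. Pinning down that the endpoints of the two half-paths are uniquely determined (the first half goes from $c_i$ to the node agreeing with $c_i$ on bit $b_m$ but equal to $c_j$ on $b_1,\dots,b_{m-1}$, and symmetrically for the second half) is the crux that makes the induction close cleanly.
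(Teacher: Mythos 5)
Your overall strategy is the same one the paper (very tersely) invokes: uniqueness follows because the Binary-Reflected Gray construction is deterministic once the Hamming bits are fixed. The paper's own proof is a single sentence asserting exactly this, so your induction on the recursive structure $G_m = \bigl(0G_{m-1},\, 1G_{m-1}^{R}\bigr)$ is a legitimate and welcome attempt to make that one-liner rigorous. Steps (1) and (2) are fine, and the base case is fine.

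However, the step you yourself identify as the crux is pinned down incorrectly, and the error is visible against the paper's own worked example. You claim the first half-path runs from $c_i$ to ``the node agreeing with $c_i$ on bit $b_m$ but equal to $c_j$ on $b_1,\dots,b_{m-1}$.'' In the reflected construction, the unique flip of the most significant Hamming bit occurs at the fixed junction between the two halves, i.e., between $0x$ and $1x$ where $x$ is the \emph{last} codeword of $G_{m-1}$ (the string $10^{m-2}$); its location is dictated by the code itself, not by the values of $c_j$ on the lower-order Hamming bits. Concretely, for the paper's example $c_i=0001$, $c_j=1000$ (Hamming bits in positions $4$ and $1$), your characterization routes the path through $0000$, giving $0001 \to 0000 \to 1000$, whereas Definition~\ref{Defn: gray path} as illustrated in the paper yields $0001 \to 1001 \to 1000$. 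So the induction as written forces the \emph{wrong} object and does not establish uniqueness of the path the definition actually specifies. Two further loose ends compound this: the second sub-path lives in the reversed copy $G_{m-1}^{R}$, so it is the reverse of a segment of $G_{m-1}$ rather than literally ``a BRG path between two endpoints,'' and your inductive hypothesis does not directly cover it; and your claim that the path always visits exactly $m+1$ nodes (each Hamming bit flipped once) is not justified by the ``segment of the canonical sequence'' reading you adopt -- for $m\ge 3$ the forward segment between complementary codewords of $G_m$ can be longer than $m$. To close the argument you would need to (a) locate the half-to-half crossing at the junction determined by $G_{m-1}$'s last codeword, and (b) observe that uniqueness is preserved under reversal of the second half; with those repairs the induction goes through.
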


\begin{proof}
The uniqueness of the path between two nodes on the $k$-cube follows from the uniqueness of BRG code, as only one such path can be constructed. 
\end{proof}

\begin{defn}(Complete $x$-bit BRG cycle).\label{Defn: gray cycle}
	Given a $k$-cube, a complete $x$-bit BRG cycle is a cyclic BRG path with the length of $2^x$, in which only $x$ bits are affected. We denote the set of all possible complete $x$-bit BRG cycles by $\mathcal{L}_x = \{ \bigcup l_i \}$. 
\end{defn}

\begin{exmp}\label{Example: 2}
In Fig.~\ref{Fig: Graph G1}, token *0** entails eight indices 0000, 0001, 0011, 0010, 1010, 1011, 1001, 1000. This token maps uniquely to the complete 3-bit BRG cycle on the 4-cube with nodes $(c_1,\,c_2,\,c_6,\,c_3,\,c_{9},\,c_{13},\,c_8,\,c_5)$ and start point $c_1$. 
\end{exmp}

We can uniquely associate a token to a cycle on the $k$-cube.  Consider a token with $k$ bits and $x$ stars. This token is mapped to a complete $x$-bit BRG cycle on the $k$-cube, starting from a node in which all the star bits are set to zero. Such a cycle is unique and has a length of $2^x$. Based on this mapping, every token is associated with a unique cycle on the $k$-cube, and every complete $x$-bit BRG cycle is mapped to a unique token with $x$-stars. Therefore, there is a one-to-one correspondence between tokens and complete BRG cycles. 
The formulation of Problem~\ref{Problem: 1} based on graph embedding can be written as follows:

\begin{problem}\label{Problem: 2}
    Given two graphs $G_1(\mathcal{C}, \mathcal{E}_1)$ and $G_2(\mathcal{V}, \mathcal{E}_2)$, find a mapping function $\mathcal{F}: G_1\rightarrow G_2$ with the objective to
    \begin{align}\label{maximizing1}
    Maximize\{\sum_{i=1}^{k} p(\mathcal{L}_i)\}. 
    \end{align}
\end{problem}

\subsection{Gray Optimizer (GO)}\label{Sec: Gray Optimizer}

The problem of embedding a complete graph within a minimized size $k$-cube has been shown to be NP-hard~\cite{chandrasekharam1994genetic}. We develop an heuristic algorithm called \textit{Gray Optimizer} that solves Problem~\ref{Problem: 2}. Consider an initial node of the complete graph $v_r \in \mathcal{V}$, and without loss of generality assume that it is assigned to index $c_1$. We refer to nodes in $\mathcal{G}_1$ interchangeably using their vertex id or binary index. The optimization problem can be formulated as follows.

\begin{problem}\label{Problem: 3}
    Given two graphs $G_1(\mathcal{C}, \mathcal{E}_1)$ and $G_2(\mathcal{V}, \mathcal{E}_2)$, and the node $v_r\in \mathcal{V}$ assigned to index $c_1$, find a mapping function $\mathcal{F}: G_1\rightarrow G_2$ that  
    \begin{align}\label{maximizing2}
    Maximize\{ \sum_{i=1}^{k} p(\mathcal{L}_i|v_r)\}. 
    \end{align}
    \end{problem}

Problem~\ref{Problem: 2} requires an assignment of vertices in $G_2$ to the nodes of $G_1$ such that the probability of complete BRG cycles is maximized; whereas Problem~\ref{Problem: 3} seeks to maximize the probability of cycles with respect to a particular node, in this case $v_r$, which is assigned to the index $c_1$. A reasonable candidate for assignment to $c_1$ is the cell with the highest probability, as it is most likely to be part of an alert zone. To solve this problem, we propose the heuristic in Algorithm~\ref{Algo: Mapping Algorithm}.
The input of the algorithm is the root index $c_1\in G_1$, the root node $v_r\in G_2$ (also called seed) and the graphs $G_1$ and $G_2$. 

\begin{algorithm}[tbh]
	\DontPrintSemicolon 
	\SetKwInOut{Input}{Input}\SetKwInOut{Output}{output}
	\Input{$G_1$; $G_2$; $c_1$; $v_r$}
	Sort nodes in $G_2$ based on probabilities\\ 
    Assign $v_r$ to $c_1$\\
	\For {$i\, \textrm{in}\, [1:k]$} {
	Initialize $\mathcal{H}_1, \mathcal{H}_2=\emptyset$\\  
    $\mathcal{H}_1 \leftarrow \{ \binom{k}{i}$ non-assigned nodes in $G_2$ with the highest probability$\}$\\
   
    \For {$c_j \in \mathcal{D}_{i|c_1}$}{
    Calculate $ p(l_j/c_j) = \prod_{v\in l_j/c_j} p(v)$\\
    $\mathcal{H}_2 \leftarrow p(l_j/c_j)$\\
    }
    Sort nodes in $\mathcal{H}_2$ \\
   Match vertices in $\mathcal{H}_1$ to $\mathcal{H}_2$\\
	}
	\caption{Gray Optimizer.}
	\label{Algo: Mapping Algorithm}
\end{algorithm}

Denote by $\mathcal{D}_{i|c_1}$ the set of nodes on $\mathcal{C}$ that have a Hamming distance of $i$ from $c_1$. Note that $\mathcal{D}_{i|c_1}$ includes $\binom{k}{i}$ nodes, each one having a Hamming distance of $i$ from $c_1$. The overall assignment structure is as follows: first, Algorithm~\ref{Algo: Mapping Algorithm} assigns the remaining nodes of $\mathcal{V}$ of the graph $\mathcal{G}_2$ to nodes in $\mathcal{D}_{1|c_1}$. After assignment of all nodes in $\mathcal{D}_{1|c_1}$, the algorithm assigns the nodes in $\mathcal{D}_{2|c_1}$ and follows the same process until all nodes are assigned ($\mathcal{D}_{1|c_1}$ to $\mathcal{D}_{k|c_1}$). An initial sorting of nodes in $\mathcal{V}$ is conducted at the start of the algorithm, and is used throughout the assignment process to reduce the computation complexity.

The assignment objective in stage $i$ of the process is to maximize $p(\mathcal{L}_i|v_r)$. 

Note that~(\ref{maximizing2}) can be written as: 
\begin{align}\label{maximizing3}
\sum_{i=1}^{k} Maximize\{p(\mathcal{L}_i|v_r)\}. 
\end{align}
where $p(\mathcal{L}_i|v_r)$ represents the probability of all complete $i$-bit BRG cycles that include $c_1$ ($v_r\rightarrow c_1$). Denote such a cycle by $l$. Based on the following lemma, there exists one and only one node $c_j$ in $l$ that has a Hamming distance of $i$ from $c_1$, which means that $c_j \in \mathcal{D}_{i|c_1}$. Therefore, every complete $i$-bit BRG cycle given index $c_1$ includes one node in $\mathcal{D}_{i|c_1}$. On the other hand, every node in $\mathcal{D}_{i|c_1}$ corresponds to a unique complete $i$-bit BRG cycle passing through $c_1$, as it results from Lemma~\ref{Thm: gray path lemma}. Therefore, all complete $i$-bit BRG cycles are considered in stage $i$ and we maximize their probabilities in this stage of the assignment.  

\begin{lemma}\label{Thm: uniqueness}
For each node $c_i$ in a complete $x$-bit BRG cycle, there exists one and only one node with the Hamming distance of $x$ from $c_i$.
\end{lemma}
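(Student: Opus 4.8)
The plan is to argue by exploiting the structure of a complete $x$-bit BRG cycle together with the uniqueness result already established in Lemma~\ref{Thm: gray path lemma}. Fix a node $c_i$ on a complete $x$-bit BRG cycle $l$. By Definition~\ref{Defn: gray cycle}, exactly $x$ bit positions are ever flipped along $l$; call this set of positions the \emph{active bits}, and note $|l| = 2^x$. The vertices of $l$, restricted to their active bits, are precisely the $2^x$ binary strings of length $x$, each appearing exactly once (the inactive bits are constant along the whole cycle). Among these $2^x$ strings, exactly one differs from $c_i$ in all $x$ active positions, namely the bitwise complement of $c_i$ on the active bits. That string is realized by a unique vertex $c_j$ on $l$, and $d_h(c_i,c_j) = x$. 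This gives existence.

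For uniqueness, I would show that no other vertex on $l$ can be at Hamming distance $x$ from $c_i$. Any vertex $c_m$ on $l$ agrees with $c_i$ on all inactive bits, so $d_h(c_i,c_m)$ equals the number of active positions in which they differ, which is at most $x$; equality forces them to differ in \emph{every} active position, which pins $c_m$ down to the single complement string identified above. Hence $c_m = c_j$. This is the heart of the argument and is essentially a counting/pigeonhole observation once the "active bits" structure is made explicit.

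The one point that needs a little care — and what I would flag as the main obstacle — is justifying that the active bits of a complete $x$-bit BRG cycle really do take on all $2^x$ combinations with no repeats, i.e., that the cycle is a genuine enumeration of $\{0,1\}^x$. This should follow from the recursive (reflected) construction of BRG code: a complete $x$-bit BRG cycle is, by Definition~\ref{Defn: gray path} and Definition~\ref{Defn: gray cycle}, the standard cyclic Gray code on the $x$ active bits embedded in the $k$-cube, and the classical fact that the $x$-bit reflected Gray code visits every element of $\{0,1\}^x$ exactly once and returns to the start. I would cite this property (or the uniqueness of BRG paths from Lemma~\ref{Thm: gray path lemma}, which already encodes it) rather than re-derive it. Once that is in hand, the existence-and-uniqueness claim is immediate from the complementation argument above, and the proof is short.
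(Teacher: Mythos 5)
Your proof is correct and follows essentially the same route as the paper's: since only $x$ bits are affected along the cycle, any vertex is at Hamming distance at most $x$ from $c_i$, and distance exactly $x$ forces the bitwise complement on those active bits. Your version is in fact slightly more careful than the paper's two-sentence argument, since you also explicitly establish \emph{existence} (that the complement string is actually realized on the cycle, via the fact that a complete $x$-bit BRG cycle of length $2^x$ enumerates all of $\{0,1\}^x$ on the active bits), a point the paper leaves implicit.
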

\begin{proof}
A complete $x$-bit BRG cycle includes $2^x$ nodes and only $x$ bits are affected. Therefore, the only index that can exist with the Hamming distance of $x$ from $c_i$ is the one in which all $x$ Hamming bits are flipped.
\end{proof}

The assignment process in the stage $i$ of GO creates a bipartite graph, i.e., $(\mathcal{H}_1,\mathcal{H}_2,\mathcal{E}_3)$, where $\mathcal{H}_1$ and $\mathcal{H}_2$ are two set of nodes, and $\mathcal{E}_3$ represents the set of edges. In this stage, the nodes in sets $\mathcal{D}_{1|c_1}$, $\mathcal{D}_{2|c_1}$,...,$\mathcal{D}_{i-1|c_1}$ are already assigned and we aim to find the best assignment for the nodes in $\mathcal{D}_{i|c_1}$ such that $p(\mathcal{L}_i|v_r)$ is maximized. Among the remaining nodes in $\mathcal{V}$, we choose $\binom{k}{i}$ of them that have the highest probabilities, as $|\mathcal{D}_{i|c_1}|=\binom{k}{i}$, and allocate them to $\mathcal{H}_1$.

On the other hand, for each node $c_j$ in $\mathcal{D}_{i|c_1}$, we construct the unique complete $i$-bit BRG cycle including $c_j$ and $c_1$. Let us represent this cycle by $l_j$. Note that all nodes included in $l_j$ are assigned except $c_j$. The algorithm calculates the probability of the set of nodes in $l_j$ excluding $c_j$ and allocates it to a node in $\mathcal{H}_2$. Based on~(\ref{base}), this probability can be calculated as:
\begin{equation}\label{base2}
	p(l_j \diagdown \{c_j\}) = \prod_{v\in l_j \diagdown \{c_j\}} p(v),
\end{equation}
The algorithm repeats the process for all nodes in $\mathcal{D}_{i|c_1}$.
Next, the nodes in $\mathcal{H}_2$ are sorted, and the best matching is conducted between these two sets of nodes by assigning the $i^{th}$ node of $\mathcal{H}_1$ to the $i^{th}$ node of $\mathcal{H}_2$. The optimality of the matching process is proven in Lemma~\ref{Thm: matching}, and the achievement of maximal assignment in each stage is proven in Lemma~\ref{Thm: main proof}.

\begin{figure}[t]
	\subfloat[Sample grid.\label{Fig: Sample grid 2}]{%
		\includegraphics[scale=.38]{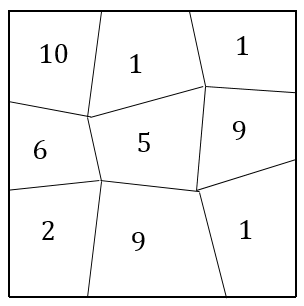}

	}
	\hfill
	\subfloat[Graph $G_1(\mathcal{C}, \mathcal{E}_1)$.\label{Fig: k-cube}]{%
		\includegraphics[scale=.38]{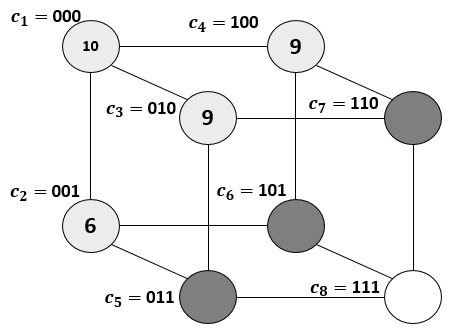}
	}
	\hfill
	\centering
	\subfloat[Graph $G_2(\mathcal{V}, \mathcal{E}_2)$.\label{Fig: Hungarian}]{%
	\includegraphics[scale=.38]{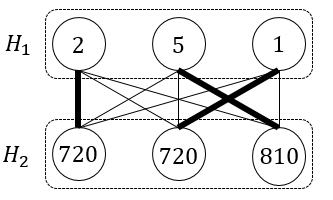}
	}
	\caption{An example of embedding graphs generated based on a sample grid.}
\end{figure}

\begin{lemma}\label{Thm: matching}
Suppose in the $i^{th}$ step of the algorithm $h_1$ to $h_{\binom{k}{i}}$ are the members of $\mathcal{H}_1$ and $h_1'$ to $h_{\binom{k}{i}}'$ are the members of $\mathcal{H}_2$ such that $h_1\leq h_2 \leq ... \leq h_{\binom{k}{i}}$ and $h_1'\leq h_2' \leq ... \leq h_{\binom{k}{i}}'$. The optimal value of matching is achieved when $h_i$ is matched with $h_i'$.
\end{lemma}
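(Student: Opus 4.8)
The plan is to recognize Lemma~\ref{Thm: matching} as an instance of the classical rearrangement inequality, applied to a product-structured objective. First I would make the objective of the matching step explicit: when the $i$-th node of $\mathcal{H}_1$ (a cell with probability $h_i$, to be assigned to some $c_j \in \mathcal{D}_{i|c_1}$) is matched with the $j$-th node of $\mathcal{H}_2$ (carrying the value $h_j' = p(l_j \diagdown \{c_j\})$, the product of probabilities of the already-assigned nodes on the BRG cycle $l_j$), the contribution of that pair to $p(\mathcal{L}_i|v_r)$ is the full cycle probability $h_i \cdot h_j'$, by~(\ref{base}) and~(\ref{base2}). Summing over the matching, the total is $\sum_{\sigma} h_{\sigma(j)} h_j'$ over a permutation $\sigma$, and we want to show this is maximized when the sorted order of $\mathcal{H}_1$ is aligned with the sorted order of $\mathcal{H}_2$, i.e. $\sigma = \mathrm{id}$ under the stated orderings.

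The key steps, in order: (1) establish that the quantities $h_j'$ do not depend on which element of $\mathcal{H}_1$ is matched to $c_j$ — this is the crucial structural fact, and it holds because at stage $i$ every node on the cycle $l_j$ other than $c_j$ lies in some $\mathcal{D}_{m|c_1}$ with $m < i$ (a consequence of Lemma~\ref{Thm: uniqueness}: $c_j$ is the unique node of $l_j$ at Hamming distance $i$ from $c_1$, so all others are strictly closer and hence already assigned), so $h_j'$ is fully determined before the stage-$i$ matching is chosen; (2) conclude that the objective is exactly the bilinear form $\sum_j h_{\sigma(j)} h_j'$ with both sequences nonnegative; (3) invoke the rearrangement inequality: for nonnegative reals, $\sum_j a_{\sigma(j)} b_j$ is maximized over permutations $\sigma$ when $a$ and $b$ are sorted in the same order. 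With $h_1 \le \dots \le h_{\binom{k}{i}}$ and $h_1' \le \dots \le h_{\binom{k}{i}}'$, this gives the matching $h_i \leftrightarrow h_i'$, as claimed. For self-containedness I would include the one-line exchange argument for rearrangement: if $\sigma$ is not the identity there exist indices $a<b$ with $h_{\sigma(a)} > h_{\sigma(b)}$ while $h_a' \le h_b'$, and swapping their images changes the sum by $(h_{\sigma(a)} - h_{\sigma(b)})(h_b' - h_a') \ge 0$, so the sorted matching is at least as good; iterating removes all inversions.

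The main obstacle is step (1) — making airtight the claim that the stage-$i$ matching is genuinely a rearrangement problem, i.e. that the "$\mathcal{H}_2$ values" are frozen and that the contributions of distinct pairs to $p(\mathcal{L}_i|v_r)$ are additive with no shared factors between pairs. Additivity across pairs follows because distinct $c_j, c_{j'} \in \mathcal{D}_{i|c_1}$ generate distinct complete $i$-bit BRG cycles (by Lemma~\ref{Thm: gray path lemma}, the BRG path, and hence the cycle, through $c_1$ and a given endpoint is unique), and $p(\mathcal{L}_i|v_r)$ is by definition the sum over all such cycles. The "frozen values" point is exactly the Hamming-distance observation above. Once these two facts are in place the inequality itself is routine, so I would spend the bulk of the write-up on them and dispatch rearrangement in a sentence or two.
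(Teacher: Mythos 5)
Your proposal is correct and uses essentially the same argument as the paper: both reduce the matching step to maximizing the bilinear sum $\sum_j h_{\sigma(j)}h_j'$ and settle it by the standard exchange step, factoring the change under a swap as $(h_i-h_k)(h_j'-h_t')$ (the paper phrases this as a one-shot contradiction against a non-sorted optimum, you as iteratively removing inversions — the rearrangement inequality either way). The one substantive addition in your write-up is the explicit justification that the $\mathcal{H}_2$ values are fixed before the stage-$i$ matching and that cycle contributions are additive; the paper takes this for granted, so your extra care only strengthens the argument.
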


\begin{proof}
Suppose that the converse is true. Hence, there exist two nodes $h_i$ and $h_k$ which are paired with $h_j'$ and $h_t'$, respectively, such that  $h_i \leq h_k$ and $h_j' \geq h_t'$. Since the current matching is maximal by swapping $h_j'$ and $h_t'$, we have 

\begin{equation}\label{Equ: cost function}
        h_ih_j'+h_kh_t' +R > h_ih_t'+h_kh_j'+R,
\end{equation}
where $R$ indicates the remaining pairing summation. Re-writing equation (\ref{Equ: cost function}) results in
\begin{equation}\label{Equ: cost function}
        (h_i - h_k)\times (h_j'-h_t') > 0.
\end{equation}
However, $h_i \leq h_k$ and $h_j' \geq h_t'$, therefore, the left hand side of the equation is always less than or equal to zero, which is a contradiction. The case for equality of equation (\ref{Equ: cost function}) is removed as swapping does not change the summation and the lemma holds.   
\end{proof}

\begin{lemma}\label{Thm: main proof}
In stage $i$, GO maximizes $p(\mathcal{L}_i|v_r)$ given the currently assigned nodes ($\mathcal{D}_{1|c_1},\, \mathcal{D}_{2|c_1}, ...,\, \mathcal{D}_{i-1|c_1}$).
\end{lemma}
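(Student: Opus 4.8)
The plan is to argue by decomposing the objective $p(\mathcal{L}_i|v_r)$ over the nodes of $\mathcal{D}_{i|c_1}$ and then invoking the two structural facts already established. First I would recall that, by Lemma~\ref{Thm: uniqueness} together with Lemma~\ref{Thm: gray path lemma}, stage $i$ sets up a bijection between the nodes $c_j \in \mathcal{D}_{i|c_1}$ and the complete $i$-bit BRG cycles through $c_1$: each such cycle $l_j$ contains exactly one node at Hamming distance $i$ from $c_1$, and conversely each $c_j \in \mathcal{D}_{i|c_1}$ determines a unique such cycle. Hence
\begin{equation}\label{Equ: decomposition}
p(\mathcal{L}_i|v_r) = \sum_{c_j \in \mathcal{D}_{i|c_1}} p(l_j) = \sum_{c_j \in \mathcal{D}_{i|c_1}} p(c_j)\, p\bigl(l_j \diagdown \{c_j\}\bigr).
\end{equation}
At the start of stage $i$, the factors $p\bigl(l_j \diagdown \{c_j\}\bigr)$ are fully determined: every node of $l_j$ other than $c_j$ lies in some $\mathcal{D}_{m|c_1}$ with $m<i$ (since a BRG path from $c_1$ to $c_j$ only passes through nodes whose flipped-bit set is a subset of the $i$ Hamming bits, and the intermediate ones flip strictly fewer bits), and those have already been assigned. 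So the only freedom left in stage $i$ is which remaining cell of $\mathcal{V}$ to place on each $c_j$, i.e.\ how to choose and order the values $p(c_j)$.

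Next I would show the choice of \emph{which} $\binom{k}{i}$ cells to use is correct. The quantity in \eqref{Equ: decomposition} is a sum of products $p(c_j)\,w_j$ where $w_j := p(l_j\diagdown\{c_j\})>0$ are fixed weights and the $p(c_j)$ are a selection of $\binom{k}{i}$ values drawn from the pool of not-yet-assigned cells. Because all weights are positive, any cell placed in $\mathcal{D}_{i|c_1}$ contributes a strictly positive multiple of its probability, whereas postponing it to a later stage $i'>i$ also contributes a positive multiple — but the total over all stages is maximized, term by term, by never ``wasting'' a high-probability cell, and more precisely by an exchange argument: if an optimal assignment used a cell $v$ outside the top $\binom{k}{i}$ at stage $i$ while a higher-probability cell $v'$ were used later, swapping them cannot decrease any $w_j$ (the weights of later stages depend only on earlier assignments as multisets in the product, and the BRG-cycle structure makes the stage-$i$ weights independent of which specific cells fill $\mathcal{D}_{i|c_1}$) and does not decrease the objective. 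This justifies line 5 of Algorithm~\ref{Algo: Mapping Algorithm}, the selection of the $\binom{k}{i}$ highest-probability unassigned cells into $\mathcal{H}_1$.

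Finally, given that fixed set $\mathcal{H}_1$ of probabilities and the fixed weight vector $\mathcal{H}_2 = \{w_j\}$, maximizing \eqref{Equ: decomposition} is exactly the assignment problem of pairing the $p(c_j)$'s with the $w_j$'s to maximize $\sum_j p(c_j) w_j$, and Lemma~\ref{Thm: matching} says the sorted-to-sorted matching is optimal. Combining the three observations — the decomposition \eqref{Equ: decomposition}, the optimality of the greedy cell selection, and the optimality of the sorted matching — yields that stage $i$ returns the maximum possible value of $p(\mathcal{L}_i|v_r)$ conditioned on the assignments of $\mathcal{D}_{1|c_1},\dots,\mathcal{D}_{i-1|c_1}$, which is the claim. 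I expect the main obstacle to be making rigorous the assertion that the stage-$i$ weights $w_j$ are genuinely independent of which specific cells were chosen for $\mathcal{D}_{i|c_1}$ (as opposed to the earlier stages), and more generally arguing the exchange step cleanly given that later-stage weights are products over node sets that do include stage-$i$ nodes; I would handle this by induction on $i$, treating the already-assigned prefix $\mathcal{D}_{1|c_1},\dots,\mathcal{D}_{i-1|c_1}$ as given and showing the stage-$i$ contribution plus the best achievable remainder is simultaneously maximized by the greedy top-$\binom{k}{i}$ choice.
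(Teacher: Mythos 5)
Your proof is correct and follows essentially the same route as the paper's: the bijection from Lemma~\ref{Thm: uniqueness} between $\mathcal{D}_{i|c_1}$ and the complete $i$-bit BRG cycles through $c_1$, the selection of the top-$\binom{k}{i}$ unassigned cells, and the sorted-to-sorted matching of Lemma~\ref{Thm: matching}; your explicit linear decomposition $p(\mathcal{L}_i|v_r)=\sum_j p(c_j)\,w_j$ with fixed positive weights simply makes sharp what the paper's inductive argument leaves implicit. The cross-stage exchange argument you flag as the main obstacle is not actually needed for the lemma as stated, since it claims only stage-$i$ optimality conditioned on the earlier assignments (not global optimality across stages), for which greedy selection plus sorted matching suffices once the weights $w_j$ are fixed.
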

\begin{proof}
We prove the lemma based on mathematical induction. \\
\textit{Base case:} For $i=1$, given that the node $v_r$ is assigned to $c_1$, we aim to prove that GO maximizes $p(\mathcal{L}_1|v_r)$. To start with, GO chooses $\binom{k}{1}$ remaining nodes of $\mathcal{V}$ for the purpose of assignment. The optimal assignment of nodes in $\mathcal{D}_{1|c_1}$ is a permutation of the chosen nodes; otherwise, they could be replaced with a node with a higher probability that would result in a higher value for $p(\mathcal{L}_1|v_r)$. Next, the algorithm generates a bipartite graph $(\mathcal{H}_1,\mathcal{H}_2,\mathcal{E}_3)$. The probability of chosen nodes are allocated to $\mathcal{H}_1$, and the nodes in $\mathcal{H}_2$ represent the probability of complete 1-bit gray cycles constructed from $c_j \in \mathcal{D}_{1|c_1}$ and the node $c_1$, excluding the probability of $c_j$ itself. Next, the optimal matching is done by assigning the $j^{th}$ maximum node in $\mathcal{H}_2$ to the $j^{th}$ maximum node in $\mathcal{H}_1$, achieving maximal $p(\mathcal{L}_1|v_r)$ given the node $c_1$.

\textit{Induction step:} Let us assume that GO has maximized the probabilities of complete x-bit BRG cycles for $x=1\textrm{ to }i-1$ in stages one to $i-1$. We prove that in stage $i$, the algorithm maximizes complete $i$-bit gray cycles, given the previously assigned nodes. 

Based on Lemma~\ref{Thm: uniqueness}, all complete $i$-bit BRG cycles are considered in stage $i$, as each such cycle includes exactly one node in $\mathcal{D}_{i|c_1}$, which has the highest Hamming distance from $c_1$. {\em GO} starts by choosing the cells with the highest probabilities and assigning them to $\mathcal{H}_1$. Same as in the base case, we know that the optimal assignment in this stage includes the chosen set of nodes. Next, the nodes in $\mathcal{H}_2$ are assigned based on finding the probability of complete $i$-bit BRG cycles for nodes in $\mathcal{D}_{i|c_1}$, excluding the nodes themselves from the probability. As the matching process is optimal match, the best permutation of nodes in $\mathcal{H}_1$ is matched to complete $i$-bit BRG cycles.\qed
\end{proof}

\section{Scaling Up Gray Optimizer}\label{Sec: Scaling Up Gray Optimizer} 

The GO algorithm can lead to significant improvements in the processing of HVE operations; however, there are two major drawbacks once the algorithm is applied to grids with high granularities. {\em (i)} The complexity of the algorithm creates a processing time bottleneck for its application in HVE; {\em (ii)} The calculation of probabilities for large complete BRG cycles may result in numerical inaccuracies. To make {\em GO} applicable to grids with higher levels of granularity, we propose two variations.

The first proposed algorithm, called {\em Multiple Seed Gray Optimizer (MSGO)} (Section~\ref{Sec: Multiple Seed Gray Optimizer (MSGO)}), generates non-overlapping clusters and applies {\em GO} within each one of them. The second algorithm, called Scaled Gray Optimizer (SGO) (Section~\ref{Sec: Large Grids}) takes a \textit{Breadth-First Search (BFS)}~\cite{leiserson2001introduction} approach. The performance of BFS is preferred to its counterpart \textit{Depth-First Search (DFS)} as the nodes closer to the seed have higher probabilities. Thus, it is reasonable to consider those nodes earlier in the process.

\subsection{Multiple Seed Gray Optimizer (MSGO)}\label{Sec: Multiple Seed Gray Optimizer (MSGO)}

The starting point of the {\em GO} algorithm, which we refer to as {\em seed}, was chosen as the node in $G_2$ with the maximum probability. However, the algorithm can work starting with any initial seed, then follow the assignment process for other nodes in ascending order of their Hamming distance from the seed.
Furthermore, as BRG cycles become larger, their associated probability becomes smaller. Thus, one way to reduce the complexity of {\em GO} is to run the algorithm up to a particular \textit{depth}. 
Essentially, the algorithm aims at optimizing BRG cycles up to a certain length. We enhance {\em GO} by running Algorithm~\ref{Algo: Mapping Algorithm} with multiple seeds, and also by limiting the depth of the assignment. 
\begin{defn}\label{Depth}
	$\textit{Depth: }$ For a given seed $c_j$, the GO algorithm is said to run with a depth of $i$ if it only considers the assignment of nodes in $\mathcal{D}_{1|c_j},\, \mathcal{D}_{2|c_j}, ...,\, \mathcal{D}_{i|c_j}$.
\end{defn}

The pseudocode of the proposed approach is presented in Algorithm~\ref{Alg: Large Grid Mapping Algorithm (MSGO)}. The algorithm starts by assigning the node with the highest probability in $G_2$ to the origin of $G_1$ or a random index. However, instead of running GO with respect to this index for all depths from one to $k$, MSGO runs GO with the specified depth as input. The algorithm completes the process of assignment for a cluster of indices in $G_1$. MSGO then chooses a random index of $G_1$ among the remaining indices and assigns it to the node in $G_2$ with maximum probability among remaining nodes. Similarly, this index is used as a seed for GO with the specified depth and generates a new cluster. The cluster-based approach continues until all nodes are assigned to an index. The algorithm supports variable cluster sizes based on the underlying application.

\begin{algorithm}[tbh]
	\DontPrintSemicolon 
	\SetKwInOut{Input}{Input}\SetKwInOut{Output}{output}
	\Input{$G_1$; $G_2$; \textit{depth}}
	Sort nodes in $G_2$ based on probabilities\\ 
    Select a random index on $G_1$ which is not currently assigned\\
    Assign the index with the node that has the maximum probability in $G_2$\\
    Apply Algorithm~\ref{Algo: Mapping Algorithm} on the selected index with the
    specified depth\\
    \textbf{Repeat} lines 2-4 \textbf{until} all indices are assigned\\ 
	\caption{Multiple Seed Gray Optimizer~(MSGO).}
	\label{Alg: Large Grid Mapping Algorithm (MSGO)}
\end{algorithm}

The MSGO algorithm provides a robust solution for grids with higher granularity. The algorithm no longer suffers the drawbacks of GO when the grid size grows, such as numerical inaccuracies in the calculation of the probability of large cycles. The complexity of the algorithm depends on the depth chosen as input, and in low depths, it can be implemented in $\mathcal{O}(n(\log_2n))$. MSGO can significantly reduce the number of operations required for the implementation of HVE in location-based alert systems, and therefore, making it a practical solution for preserving the privacy of users in location-based alert systems. 

\subsection{Scaled Gray Optimizer (SGO)}\label{Sec: Large Grids}

SGO considers overlapping clusters and necessitates that all nodes act as seed during the assignment process. The pseudocode of the proposed approach is presented in Algorithm~\ref{Algo: Large Grid Mapping Algorithm}. SGO starts by assigning the node with the highest probability to an index on $G_1$. However, instead of assigning indices with all depths from one to $k$ with respect to index $c_1$, the SGO algorithm runs GO with the depth of one. 
Next, SGO sorts the indices in $\mathcal{D}_{1|c_1}$ based on their assigned probabilities in descending order and runs GO with the depth of one on each index. 
Once the algorithm is applied on all the indices in $\mathcal{D}_{1|c_1}$, the process repeats for indices in $\mathcal{D}_{2|c_1}$, $\mathcal{D}_{3|c_1}$, ..., etc. The algorithm continues until all indices are assigned to a node.

\begin{algorithm}[tbh]
	\DontPrintSemicolon 
	\SetKwInOut{Input}{Input}\SetKwInOut{Output}{output}
	\Input{$G_1$; $G_2$}
	Sort nodes in $G_2$ based on probabilities\\
    Assign $v_r\in G_2$ with the highest probability to the origin of $G_1$, i.e., $c_1$\\
    Apply Algorithm~\ref{Algo: Mapping Algorithm} on $c_1$ with the depth of one\\
	\For {$i\, \textrm{in}\, [1:k]$} {
	Sort $\mathcal{D}_{i|c_1}$ in descending order of probabilities assigned to its indices\\
	\For {$c_j$ in $\mathcal{D}_{i|c_1}$} {
	Apply Algorithm~\ref{Algo: Mapping Algorithm} on $c_j$ with the depth of one}
	}
	\caption{Scaled Gray Optimizer (SGO).}
	\label{Algo: Large Grid Mapping Algorithm}
\end{algorithm}

\subsection{Complexity Analysis}\label{Sec: Complexity Analysis} 

The key computation overhead of the GO algorithm is in the calculation of probability of BRG cycles. Let the function $T(.)$ return the computational complexity. In the $i^{th}$ step of the algorithm, the nodes with the hamming distance of $i$ from $c_1$ are assigned to an index on the $k$-cube, i.e., $\mathcal{D}_{i|c_1}$. The number of nodes in $\mathcal{D}_{i|c_1}$ is $\binom{log_2(n)}{i}$. For each one of such nodes the complete BRG path is calculated which requires the multiplication of $2^i-1$ probabilities. Therefore, the assignment process for the nodes in $\mathcal{D}_{i|c_1}$ requires 
\begin{equation}
    T(\mathcal{D}_{i|c_1}) = (2^i-2)\times \binom{log_2(n)}{i}
\end{equation}
operations. Hence, the total number of operations required for the algorithm is

\begin{align} 
   \bigcup_i T( \mathcal{D}_{i|c_1}) & =  \sum_{i=1}^{log_2(n)}(2^i-2)\times \binom{log_2(n)}{i} \\
    & = \sum_{i=1}^{log_2(n)} 2^i\binom{log_2(n)}{i} -2\sum_{i=1}^{log_2(n)}\binom{log_2(n)}{i}\label{Equ: complexity}. 
\end{align}
From the binomial theorem,
\begin{equation}
    \sum_{i=1}^{log_2(n)} 2^i\binom{log_2(n)}{i} = (1+2)^{log_2(n)}-1 = 3^{log_2(n)}-1,
\end{equation}
and 
\begin{equation}
    \sum_{i=1}^{log_2(n)}\binom{log_2(n)}{i} = 2^{log_2(n)}-1.
\end{equation}
Therefore, Eq.~\eqref{Equ: complexity} can be written as

\begin{equation}
   \bigcup_i T( \mathcal{D}_{i|c_1}) = n^{log_23} - 2n+1.
\end{equation}

In addition to the above operations, there exists an initial sorting of the probabilities that can be implemented based on merge sort with the complexity of $\mathcal{O}(n(\log_2n))$, and a sorting process in each stage for the nodes in $\mathcal{H}_2$. For the latter, the complexity can be written as 
\begin{align}
   &\sum_{i=1}^{log_2(n)} \mathcal{O}(\binom{log_2(n)}{i}\times \log_2(\binom{log_2(n)}{i})) = \\
   & \sum_{i=1}^{log_2(n)} \mathcal{O}(  
   \log_2\binom{log_2(n)}{i}^{\binom{log_2(n)}{i}})\leq \\
   & \sum_{i=1}^{log_2(n)} \mathcal{O}(  
   \log_2 n^{\binom{log_2(n)}{i}})\leq \\
   &  \mathcal{O}(  
   \log_2 n^{\sum_{i=1}^{log_2(n)}\binom{log_2(n)}{i}}) = \mathcal{O}(n(\log_2n))
\end{align}

Therefore, accounting for sorting, the closed-form expression for the total complexity is $\mathcal{O}(2n(\log_2n)) + n^{log_23} - 2n+1$.

The MSGO algorithm is based on executing the GO algorithm with shorter depths in a cluster based approach. Suppose that the depth is set to $r$ where $r\leq log_2n$. Running the algorithm in each cluster with similar logic as the GO requires the following number of operations.

\begin{equation}
    \sum_{i=1}^{r} (2^i-2)\binom{log_2(n)}{i}.
\end{equation}
On the other hand, the total number of clusters is approximately
\begin{equation}
    \# clusters \approx n/ \sum_{i=1}^{r} \binom{log_2(n)}{i}.
\end{equation}

Therefore, the total complexity considering the initial sorting algorithm is calculated as

\begin{equation}\label{Equ: approximate}
\begin{split}
    \mathcal{O}(n(\log_2n))&+\\ (\sum_{i=1}^{r} &(2^i-2)\binom{log_2(n)}{i})\times (n/ \sum_{i=1}^{r} \binom{log_2(n)}{i}).
\end{split}
\end{equation}
Defining the binary entropy function as
\begin{equation}
    H_2(x) = x \log_2(\dfrac{1}{x})+ (1-x)log(\dfrac{1}{1-x}),
\end{equation}
the following approximation can be used for deriving closed-form expression for various cluster sizes in Eq.~\eqref{Equ: approximate} 
\begin{equation}
    \binom{n}{i} 	\simeq  2^{nH_2(r/n)}.
\end{equation}

Lastly, the SGO algorithm executes the GO algorithm with the depth of one and has the computational complexity of $\mathcal{O}(n(\log_2n))$. The low computational complexity of SGO makes it a suitable option for the encoding of grids with higher levels of granularity. 

\section{Supporting Dynamic Alert Zones}\label{Advance Modeling of Alert Zones}

So far, we considered the case of static alert zones, and we optimized the data encoding and token generation under this scenario. However, in practice, alert zones vary over time. Whether an alert corresponds to a natural phenomenon (e.g., gas leak) or a human activity (e.g., COVID carrier movement), alert zones exhibit spatio-temporal patterns that must be accounted for in order to obtain fast performance.

We maintain the grid-based partition of the spatial domain used for the static case, and we denote by {\em state} of the grid the set of all alert cells at a given time. The occurrence probability of a state can be modeled analytically and used as a basis for grid encoding. The higher the statistical model accuracy, the more precise the encoding becomes, reducing HVE operations overhead. 
Next, we build a comprehensive statistical model to characterize alert zone evolution in space and time.  

\begin{defn}(State Space).\label{Definition: State Space}
    For a given grid $$\mathcal{V}=\{ v_1,\,,v_2,..., v_{n} \},$$ let $X$ be a random variable defined on all possible subsets of the cells. The state space of $X$ is defined as the power set $\mathcal{S}_n=\{\textbf{\underline{1}}, \textbf{\underline{2}},...,\textbf{\underline{$2^n$}}\}$.
\end{defn}

The {\em cardinality} of a state $\textbf{\underline{i}}$ represents the number of cells included in the state and is denoted by $|\textbf{\underline{i}}|$. The set of all states with the cardinality of $j$ are denoted by $\mathcal{S}^{|j|}_n$. Note that, the notation is not concerned with a precise order of states. For example, a grid with two cells $\{ v_1, v_2\}$ leads to the state space of 

\noindent
$\mathcal{S}_2 = \{\{\emptyset\},\{v_1\},\{v_2\}, \{v_1,v_2\}\}$, which is depicted by $\mathcal{S}_2=\{\textbf{\underline{1}}, \textbf{\underline{2}},\textbf{\underline{3}},\textbf{\underline{4}}\}$; however, the order of states is not captured by the notation. Two examples of such an assignment can be $\{\textbf{\underline{1}} = \{\emptyset\}, \textbf{\underline{2}} = \{v_1\},\textbf{\underline{3}} = \{v_2\},\textbf{\underline{4}} = \{v_1,v_2\} \}$,  and $\{\textbf{\underline{1}} = \{   \{v_1,v_2\}  \}, \textbf{\underline{2}} = \{v_2\},\textbf{\underline{3}} = \{v_1\},\textbf{\underline{4}} = \emptyset \}$. We provide more details on the construction of the state space and ordering in Section~\ref{subsection: Recursive Construction}. 

Let $X_0, X_1,...,X_i,...$ denote the sequence of random variables modeling the occurrence of alert zones. The set of possible values for $X_i$ is the state space of the grid, and the index $i$ denotes the evolution of the process in time. The probability of $X_i$ being in a particular state $\textbf{\underline{j}}$ is denoted as $p(X_i = \textbf{\underline{j}})$. The probability of a cell becoming part of an alert zone depends on underlying phenomena properties, existing correlations among cells, and the history of alert zones on the map. Moreover, probabilities do not remain constant over time. We identify several distinct scenarios, and we create a statistical model for each: (i) the states are independent in both space and time; (ii) the states are independent in space, but dependent in time (i.e., temporal causality); (iii) the states are independent in time but exhibit space correlation (i.e., spatial causality); and (iv) the states are dependent in both time and space. The first case corresponds to the static case introduced in the previous sections; the last case is the most general one, whereas cases (ii) and (iii) are special cases of (iv). Each case may be relevant under different types of applications and data domains. Next, we investigate in details each of the cases, and propose a data encoding and token generation technique for each. Our goal is to obtain an accurate representation of how the probabilities $X_i$ are distributed over the state space.

\subsection{Independence in Time and Space}\label{Sect: Independence in Time and Space}

Having the independence assumption in space and time greatly simplifies the problem formulation as the sequence of random variables $X_0, X_1,...,X_i,...$ become a sequence of independent and identically distributed (iid) random variables defined over the state space. Such modeling indicates that the random variables $X_1$ to $X_i$ provide no information about the random variable $X_{i+1}$. Therefore, the {\em probability mass function (PMF}) of $X_i$ depends on the probabilities of individual cells. For a given $X_i$, the probability of cell $v_i\in \mathcal{V}$ becoming part of the alert zone is denoted by $p(v_i)$, and corresponds to a value between zero and one. The mutual probability of a subset of cells $\mathcal{L} = \{ v_1,\,,v_2,..., v_{i} \}$ being in an alert zone can be calculated as 
\begin{equation}
	p(\mathcal{L}) = \prod_{j=1}^{i} p(v_j).
\end{equation}
The calculation of mutual probabilities is the direct result of the independence assumption, which indicates that there are no correlations between cells.


\subsection{Independence in Space, Dependence in Time}\label{Sec: Independence in Space, Dependence in Time}

In this case, the grid state no longer consists of iid random variables following the same PMFs. The probability of state $\textbf{\underline{i}}$ at time $j$ is no longer assumed to be equal to the probability of being in state $\textbf{\underline{i}}$ at a different time $k$, i.e.,  $p(X_j = \textbf{\underline{i}}) \neq p(X_k = \textbf{\underline{i}})$. Our objective is to determine whether the system reaches a steady state in which the probabilities no longer change significantly over time. We model the evolution of alert cells over time using Markov chains. 
We assume that alert zones evolve incrementally by addition or removal of a single cell at a time (this can always be achieved by properly choosing the time granularity).

The proposed model is represented in Fig~\ref{fig:markov model}. States $\textbf{\underline{i}}$ and $\textbf{\underline{j}}$ are connected if and only if the difference between their cardinality is one, $|\textbf{\underline{i}}-\textbf{\underline{j}}|=1$. The only exception is the state including all cells (if all cells are within the alert zone, then all have the same status). The model assumes that each state depends only on the previous state, and therefore, it follows Markov chain properties, i.e., for all $k\geq 0$,

	\begin{equation}
	\begin{split}
	    p(X_{k+1}=\textbf{\underline{j}}|X_k =\textbf{\underline{i}},X_{k-1}=\textbf{\underline{i}}_{k-1}&,...,X_0 =\textbf{\underline{i}}_0 )\\&
	    =  p(X_{k+1}=\textbf{\underline{j}}|X_k =\textbf{\underline{i}}).
	\end{split}
	\end{equation}
The forward propagation to a state with a higher cardinality indicates the addition of an alert cell, whereas forward propagation to a state with a lower cardinality indicates the removal of an alert cell. 

The value of $p(X_{k+1}=\textbf{\underline{j}}|X_k =\textbf{\underline{i}})$ is called the {\em transition probability} from state $\textbf{\underline{i}}$ to state $\textbf{\underline{j}}$ and we implicitly make the assumption that the transition probabilities are homogeneous over time. We are interested in understanding what the likelihood of being in a state is starting from any other state, and whether the chain reaches a {\em stationary distribution} in which the probabilities of individual states do not change over time. 

First, we review three properties of the proposed Markov chain:

\begin{prop}\label{recurrency}
All states in the proposed model are recurrent. Therefore, starting from any state of the chain, it is possible to reach any other state, eventually.
\end{prop}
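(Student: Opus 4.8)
The plan is to prove the structurally stronger fact first --- that the chain is \emph{irreducible} --- and then read off recurrence from finiteness. The substantive content of Property~\ref{recurrency} is really the accessibility claim ("from any state one can reach any other"), so I would start there, arguing directly from the model's definition rather than invoking deeper theory. By construction, two states are adjacent in the transition graph exactly when one is obtained from the other by adding or removing a single alert cell, and the model places a strictly positive transition probability on every such move (this is what "connected" means for the chain in Fig.~\ref{fig:markov model}). Hence, given arbitrary states $\textbf{\underline{i}}$ and $\textbf{\underline{j}}$, identified with subsets $A,B\subseteq\mathcal{V}$, I would exhibit an explicit finite path between them: remove the cells of $A\setminus B$ one at a time, descending through states of strictly decreasing cardinality until reaching the state $A\cap B$, then add the cells of $B\setminus A$ one at a time until reaching $B$. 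Every edge traversed is an allowed single-cell transition with positive probability, so $p^{(m)}(\textbf{\underline{i}},\textbf{\underline{j}})>0$ for the length $m$ of this path. Since $\textbf{\underline{i}},\textbf{\underline{j}}$ were arbitrary, all states communicate, the chain is irreducible, and in particular the "therefore" clause of the statement holds verbatim.

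Next I would close the argument by appealing to the standard fact that an irreducible Markov chain on a \emph{finite} state space is (positive) recurrent: at least one state is visited infinitely often almost surely, and since recurrence is a communicating-class property, irreducibility propagates it to every state. The state space here is $\mathcal{S}_n$, which has $2^n$ elements and is therefore finite, so this applies immediately and gives that all states are recurrent --- indeed positive recurrent, which is convenient for the later discussion of a stationary distribution $\boldsymbol{s}$.

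I do not expect a genuine obstacle here; the two places that merely need care are (i) confirming from the definition of the transition probabilities that each single-cell addition/removal has positive mass, so that the path constructed above actually has positive probability, and (ii) accommodating the exceptional all-cells state, which has no "upward" neighbor: this is harmless, since that state is still reached by adding its last missing cell from a cardinality-$(n-1)$ state and still left by removing any cell, so the connectivity argument is unaffected. Note also that aperiodicity is irrelevant --- recurrence does not require it --- so the bipartite, parity-of-cardinality structure of the transition graph causes no difficulty.
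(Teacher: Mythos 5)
Your argument is correct in its essentials and is, in fact, more complete than what the paper provides: the paper states this property (and the companion irreducibility and aperiodicity properties) without proof, leaving the justification implicit. Your route --- establish irreducibility by an explicit single-cell removal/addition path from $A$ down to $A\cap B$ and up to $B$, then invoke the standard fact that an irreducible chain on a finite state space is (positive) recurrent --- is the standard and logically sound way to obtain the claim. It also quietly repairs the paper's "therefore," which runs the implication in the wrong direction (recurrence of all states does not by itself yield mutual accessibility; irreducibility plus finiteness yields recurrence, as you argue). Your caveat (i) is genuinely needed and is acknowledged by the paper itself: if some $p(v)=0$ the corresponding edge has zero mass and irreducibility can fail, which is precisely why the paper later introduces the damped matrix $O_n=\alpha Q+(1-\alpha)J_n/2^n$.

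One detail of your handling of the exceptional state is wrong as stated, though the conclusion survives. In this model the all-cells state is \emph{not} "left by removing any cell": by construction its only outgoing transition is the single edge to the empty state ($q_{2^n 1}=1$, all other entries of that row zero). Consequently, if your path starts at $A=\mathcal{V}$ with $A\neq B$, the first "remove one cell of $A\setminus B$" step is not an allowed transition. The fix is immediate --- from $\mathcal{V}$ first jump to $\emptyset$, then add the cells of $B$ one at a time --- and no other intermediate state of your path can equal $\mathcal{V}$ unless $B=\mathcal{V}$, in which case the final step merely \emph{enters} the all-cells state, which is permitted. With that one-line amendment the irreducibility argument, and hence the recurrence claim, goes through.
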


\begin{prop}\label{irreducable}
The proposed Markov chain is irreducible, as for any two states $\textbf{\underline{i}}$ and \textbf{\underline{j}}, it is possible to reach one from the other in a finite number of steps. 
\end{prop}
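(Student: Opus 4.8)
The plan is to establish irreducibility directly, by exhibiting, for every ordered pair of states $\textbf{\underline{i}},\textbf{\underline{j}}\in\mathcal{S}_n$, a finite walk of admissible transitions from $\textbf{\underline{i}}$ to $\textbf{\underline{j}}$. I would organize the whole argument around a single hub: the empty state $\emptyset\in\mathcal{S}_n$, i.e.\ the grid configuration in which no cell is an alert cell. If $\emptyset$ is reachable from every state and every state is reachable from $\emptyset$, then concatenating these two walks connects an arbitrary $\textbf{\underline{i}}$ to an arbitrary $\textbf{\underline{j}}$, and since $\mathcal{S}_n$ is finite this is exactly the definition of an irreducible chain.

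First I would show $\emptyset$ is reachable from an arbitrary state. Write $\textbf{\underline{i}}$ as a subset $\{v_{i_1},\dots,v_{i_m}\}\subseteq\mathcal{V}$ with $m=|\textbf{\underline{i}}|$ and peel off its alert cells one at a time, producing states $\textbf{\underline{i}}=A_0\supset A_1\supset\cdots\supset A_m=\emptyset$ with $|A_t|=m-t$. Each step $A_t\to A_{t+1}$ removes exactly one alert cell, so consecutive states differ in cardinality by one and the transition is admissible in the model; hence $\emptyset$ is reached in $m\le n$ steps. Symmetrically, starting from $\emptyset$ I would reach an arbitrary $\textbf{\underline{j}}=\{u_1,\dots,u_\ell\}$ by adding its cells one at a time, $\emptyset\to\{u_1\}\to\{u_1,u_2\}\to\cdots\to\textbf{\underline{j}}$, again a chain of admissible single-cell transitions, of length $\ell=|\textbf{\underline{j}}|\le n$. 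Concatenation yields a walk $\textbf{\underline{i}}\to\cdots\to\emptyset\to\cdots\to\textbf{\underline{j}}$ of length $|\textbf{\underline{i}}|+|\textbf{\underline{j}}|\le 2n<\infty$. Note this simultaneously discharges the reachability assertion quoted in Property~\ref{recurrency}. It is worth stating explicitly that every admissible edge is traversable in both directions, since ``add a cell'' and ``remove a cell'' are both legal forward propagations of the model, so the concatenated walk is legitimate irrespective of whether cardinalities increase or decrease along it.

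The one irregular point to handle is the all-cells state $\mathcal{V}$, which the model exempts from the ``cardinality differs by one'' adjacency rule. For a walk whose endpoint (or starting point) is $\mathcal{V}$, I would argue separately that $\mathcal{V}$ is still connected, in whichever way the model prescribes for it, to at least one state of cardinality $n-1$, so that it is neither a pure source nor a pure sink; the walk then routes from $\mathcal{V}$ to that neighbour and onward to the hub $\emptyset$ as above, and conversely. I expect this boundary case to be the only real obstacle — essentially a bookkeeping matter of reconciling the adjacency exception with the hub construction — while the core ``peel off one cell / add on one cell at a time'' routing is routine.

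A tidy alternative, if one prefers to avoid any case analysis, is to observe that the state graph restricted to the $2^n-1$ non-full states is precisely the covering graph of the Boolean lattice (the hypercube graph), which is well known to be connected, and then attach $\mathcal{V}$ via its single prescribed connection; but the explicit two-leg walk through $\emptyset$ is the most self-contained route and is the one I would write up.
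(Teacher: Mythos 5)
Your argument is correct, and it supplies a proof where the paper offers none: the paper simply asserts Properties~1--3, with the phrase ``it is possible to reach one from the other in a finite number of steps'' serving as the entire justification. Your hub construction through $\emptyset$ --- peel cells off one at a time to reach $\emptyset$, then add the target's cells one at a time --- is the natural way to make that assertion rigorous, and your observation that the non-full part of the state graph is just the hypercube graph is the same fact in different clothing. One detail of your boundary case does not match the model as the paper actually defines it: the exception for the all-cells state is that its \emph{only} outgoing transition is a direct jump to the empty state ($q_{2^n 1}=1$, all other entries of the last row zero), not a connection to some cardinality-$(n-1)$ neighbour. This only simplifies your walk (from $\mathcal{V}$ you reach the hub $\emptyset$ in one step), but it does mean your blanket claim that ``every admissible edge is traversable in both directions'' fails for edges incident to $\mathcal{V}$: a cardinality-$(n-1)$ state can enter $\mathcal{V}$, but $\mathcal{V}$ cannot step back to it. Since you route both legs through $\emptyset$ rather than relying on reversing any particular edge at $\mathcal{V}$, the proof survives; you should just state the exit from $\mathcal{V}$ as the direct $\mathcal{V}\to\emptyset$ transition. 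Finally, note the shared implicit assumption (present in the paper as well) that every cell probability $p(v)$ is strictly positive, so that each structural edge carries positive transition probability; the paper itself concedes that degenerate probabilities break the chain's good behaviour and patches this with the PageRank-style damping matrix $O_n$.
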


\begin{prop}\label{periodic}
The proposed Markov chain for modeling alert zones is aperiodic, as the period of states is equal to one. 
\end{prop}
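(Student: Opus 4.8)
The plan is to reduce the statement to the existence of a single self-loop and then propagate aperiodicity to the whole chain via irreducibility (Property~\ref{irreducable}). First I would recall the definition: the period of a state $\textbf{\underline{i}}$ is $d(\textbf{\underline{i}}) = \gcd\{\, m \ge 1 : p^{(m)}(\textbf{\underline{i}},\textbf{\underline{i}}) > 0 \,\}$, where $p^{(m)}(\textbf{\underline{i}},\textbf{\underline{i}})$ denotes the probability of returning to $\textbf{\underline{i}}$ in exactly $m$ steps, and the chain is aperiodic precisely when $d(\textbf{\underline{i}}) = 1$ for every state. I would then invoke the standard fact that the period is a communication-class invariant: if $\textbf{\underline{i}}$ and $\textbf{\underline{j}}$ communicate then $d(\textbf{\underline{i}}) = d(\textbf{\underline{j}})$. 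Since the proposed chain is irreducible by Property~\ref{irreducable}, every pair of states communicates, so it suffices to establish $d(\textbf{\underline{i}}) = 1$ for a single conveniently chosen state.

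Next I would exhibit a self-loop. The model singles out the all-cells state as the exception to the ``cardinality differs by one'' adjacency rule: once every cell is an alert cell there is nothing left to add, and with positive probability nothing is removed over a single time step, so the process stays put. Hence $p^{(1)}(\textbf{\underline{i}},\textbf{\underline{i}}) > 0$ for this state $\textbf{\underline{i}}$, which places $1$ in the set whose greatest common divisor defines $d(\textbf{\underline{i}})$, forcing $d(\textbf{\underline{i}}) = 1$. Combined with the class-invariance of the period, this gives $d(\textbf{\underline{j}}) = 1$ for every state $\textbf{\underline{j}}$, i.e., the chain is aperiodic.

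The one point that genuinely needs care — and the main obstacle — is justifying that some state carries a self-loop. This is indispensable: if every transition changed the cardinality by exactly one, the chain would be bipartite between even- and odd-cardinality states and would have period two, contradicting the claim. So the argument must make explicit the modeling assumption (implicit in Fig.~\ref{fig:markov model} and in the special treatment of the all-cells state) that a time step may elapse with neither an addition nor a removal of an alert cell; once that assumption is on the table, the remaining steps — the gcd observation and the appeal to irreducibility — are routine.
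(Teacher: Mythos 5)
Your overall skeleton is the standard one and is fine: the period is a class invariant, the chain is irreducible by Property~\ref{irreducable}, so it suffices to find one state with an odd-length (indeed length-one) return. The problem is that the self-loop you rely on does not exist in this model. The paper states explicitly that the only exception to the ``cardinality differs by one'' rule is a deterministic transition from the all-cells state to the \emph{empty} state: the last row of $Q_n$ has $q_{2^n 1}=1$ and all other entries zero, and every diagonal entry of $Q_n$ is zero (see the displayed $Q_2$ and $W_2$). So from the full state the process does not ``stay put'' with positive probability --- it resets. Your claim that ``a time step may elapse with neither an addition nor a removal'' is not an implicit assumption of the model; it contradicts the transition matrix as defined. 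Consequently $p^{(1)}(\textbf{\underline{i}},\textbf{\underline{i}})=0$ for every state, and the gcd-contains-$1$ argument collapses.

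The mechanism the paper actually invokes is that this reset edge creates an odd cycle: the empty state admits a return of length $2$ (add a cell, remove it) and a return of length $n+1$ (add cells one at a time up to the full state, then reset), and when these two lengths are coprime the period of the empty state is $1$, which irreducibility then propagates to all states. Note this also exposes why the distinction matters: with a genuine self-loop aperiodicity would hold unconditionally, whereas with the reset edge one needs $\gcd(2,\,n+1)=1$, i.e., $n$ even --- for odd $n$ every transition (including the reset) flips the parity of the cardinality and the chain is in fact $2$-periodic, which is presumably why the paper later adds the PageRank-style damping $O_n=\alpha Q+(1-\alpha)J_n/2^n$ to force strictly positive entries. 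To repair your proof, replace the self-loop step with the explicit exhibition of the two cycle lengths $2$ and $n+1$ at the empty state (and state the parity caveat), keeping your class-invariance argument for the final step.
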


The above properties help to characterize the long-term behaviour of the Markov chain. If after a certain period of time the {\em transition matrix} of the chain reaches a stationary distribution, it enables us to know the probability of each state in the state space. The state {\em transition matrix} is defined as follows:

\begin{figure}[t]
\centering
\includegraphics[scale=.5]{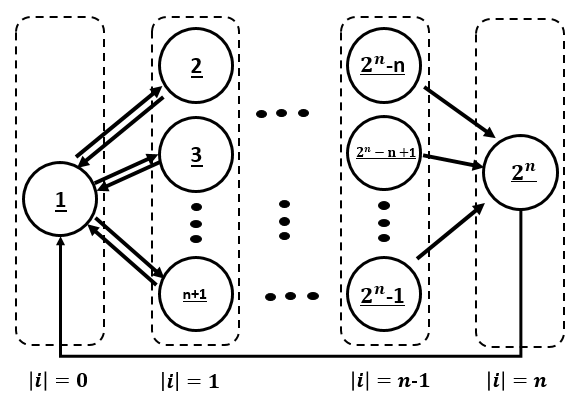}
\hspace{1em}
\centering
\caption{Proposed Markov model for alert zone evolution.}
\label{fig:markov model}
\end{figure}

\begin{defn}(Transition matrix).\label{Markov Chain}
    For a Markov chain $X_0, X_1,...,X_i,...$ with a state space $\mathcal{S}_n=\{\textbf{\underline{1}}, \textbf{\underline{2}},...,\textbf{\underline{$2^n$}}\}$, let $q_{ij} =p(X_{k+1}=\textbf{\underline{j}}|X_k =\textbf{\underline{i}})$ be the transition probability from state $\textbf{\underline{i}}$ to state $\textbf{\underline{j}}$. The $2^n \times 2^n$ matrix $Q_n = (q_{ij})$ is called the transition matrix of the chain. The value of $q_{ij}$ for $i<2^n$ is defined as $p(v)$, where $v$ is the alert cell which exists in state $\textbf{\underline{i}}$ (row) and does not exist in state $\textbf{\underline{j}}$ (column). 
\end{defn}

Recall that two states are connected if and only if their cardinality differs by one. The last row of the matrix represents the only outgoing directed edge from the state with the cardinality of $n$ to the state with cardinality of zero. Thus, the first element of the last row is one ($q_{2^n1}=1$) and all its other elements are zero. Such a row ensures the aperiodicity of the chain.

It can be inferred that the $i^{th}$ row of the transition matrix corresponds to outgoing edges from the state $\textbf{\underline{i}}$ of the Markov chain. Therefore, in order for the matrix to maintain the Markovian properties, the values in each row should sum up to one, which is indeed the case for the proposed transition matrix. This property is termed as {\em Markovian matrix property}. Let a row vector $\boldsymbol{t} = [t_1,t_2,...,t_{2^n}]$ be the PMF of $X_0$, where $t_i=p(X_0 = \textbf{\underline{i}})$. Then, based on the properties of Markovian chains, the marginal distribution of $X_m$ is given by the $j^{th}$ component of $\boldsymbol{t}Q_n^m$, i.e., $p(X_n= \textbf{\underline{j}})$. The marginal distribution indicates that given a initial state $\textbf{\underline{i}}$, the probability of being in state \textbf{\underline{j}} after $m$ transitions is the $j^{th}$ component of the vector $\boldsymbol{t}Q_n^m$. We are interested in the long run behaviour of the system and to understand if the proposed model will reach a {\em stationary distribution}.

\begin{defn}(Stationary distribution).\label{Dfn: stationary distribution}
    Given a Markov chain with the transition matrix $Q_n$, a row vector $\boldsymbol{s} = [s_1,...,s_{2^n}]$, such that $s_i\geq 0$ and $\sum_i s_i = 1 $, is a stationary distribution if \begin{equation}\label{equ: stationary distribution}
        \boldsymbol{s}Q_n=\boldsymbol{s}
    \end{equation} 
\end{defn}
We elaborate further on the meaning of the vector $\boldsymbol{s}$. Suppose that the $i^{th}$ element of the vector corresponds to the state $\textbf{\underline{i}}$. If the proposed Markov chain reaches a stationary distribution, this value represents probability $p(X_n= \textbf{\underline{i}})$ for any $n$ after reaching the stationary distribution. Thus, the importance of each state is revealed by its corresponding value in $\boldsymbol{s}$. 

\begin{exmp}\label{example: transition matrix}
Consider a map with two cells $v_1$ and $v_2$, where $p(v_1) = 0.2$ and $p(v_2) = 0.8$. The state space includes four states $\{\{\emptyset\},\{v_1\}\,\{v_2\}\,\{v_1,v_2\}\}$, and the transition matrix is calculated as
\begin{equation}\nonumber
Q_2=
\begin{bmatrix}
0 & 0.2 & 0.8 & 0\\
0.2 & 0 & 0 & 0.8\\
0.8 & 0 & 0 & 0.2\\
1 & 0 & 0 & 0
\end{bmatrix},
\end{equation}

Solving Eq.~\eqref{equ: stationary distribution} for the matrix $Q_2$ results in the eigen vector $\boldsymbol{s} = [0.4310,\, 0.0862,\, 0.3448 ,\, 0.1379]$. Hence, the probability of states are $p(\{\emptyset\}) = 0.4310$, $p(\{v_1\}) = 0.0862$, $p(\{v_2\}) = 0.3448$, $p(\{v_1,\, v_2\}) = 0.1379$. 
\end{exmp}

There are three important questions to be answered about the stationary distribution: (a) {\em does it exist?} (b) {
\em is it unique?} and (c) {\em does the Markov chain converge to the stationary distribution?} 
The stationary distribution is the left eigenvector of the transition matrix corresponding to the eigenvector of one as shown by Eq.~\eqref{equ: stationary distribution}. The existence and uniqueness of a stationary distribution for the proposed Markov model is proven in the following theorem.

\begin{thm}
There exists a unique stationary distribution for the proposed Markov chain to model alert zones. 
\end{thm}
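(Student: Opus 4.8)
The plan is to recognize the statement as the fundamental theorem of finite Markov chains specialized to the chain of Definition~\ref{Markov Chain}, and to supply existence and uniqueness from the structural facts already in hand. The three ingredients I need are: (i) the state space $\mathcal{S}_n$ is finite, with $|\mathcal{S}_n| = 2^n$; (ii) the chain is irreducible, which is Property~\ref{irreducable}; and (iii) $Q_n$ is row-stochastic, which is the Markovian matrix property noted immediately after Definition~\ref{Markov Chain} (each row sums to one, including the anomalous last row with $q_{2^n 1} = 1$). On a finite state space, irreducibility already upgrades the recurrence of Property~\ref{recurrency} to \emph{positive} recurrence, and that is precisely what makes a stationary distribution both exist and be unique; aperiodicity (Property~\ref{periodic}) is not required for this theorem -- it will be needed only afterwards, to guarantee that the iterates $\boldsymbol{t}Q_n^m$ actually converge to $\boldsymbol{s}$.

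For \textbf{existence} I would give the standard construction. Fix any state $\textbf{\underline{i}}$ and let $\tau_i = \inf\{m \ge 1 : X_m = \textbf{\underline{i}}\}$ be its first return time; positive recurrence gives $\mathbb{E}_{\textbf{\underline{i}}}[\tau_i] < \infty$, so one may define $s_i = 1/\mathbb{E}_{\textbf{\underline{i}}}[\tau_i] > 0$. A cycle (occupation-measure) argument -- counting the expected number of visits to each state during one excursion from $\textbf{\underline{i}}$ and normalizing -- shows that the resulting vector $\boldsymbol{s} = (s_1,\dots,s_{2^n})$ satisfies $\boldsymbol{s}Q_n = \boldsymbol{s}$ and $\sum_i s_i = 1$, i.e.\ meets Definition~\ref{Dfn: stationary distribution}. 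Equivalently, and more in keeping with the linear-algebraic flavour of Eq.~\eqref{equ: stationary distribution}, I can invoke Perron--Frobenius: $Q_n$ is a nonnegative irreducible matrix, its spectral radius equals $1$ because it is stochastic, this eigenvalue is simple, and its left eigenvector can be taken strictly positive; normalizing that eigenvector to sum to one produces $\boldsymbol{s}$.

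For \textbf{uniqueness} I would lean on the same Perron--Frobenius fact: since the eigenvalue $1$ of $Q_n$ is simple, its left eigenspace is one-dimensional, so any two stationary vectors are scalar multiples of one another, and the constraint $\sum_i s_i = 1$ fixes the scalar. A self-contained alternative avoiding Perron--Frobenius: if $\boldsymbol{s}$ is stationary then $s_i > 0$ for every $i$ by irreducibility, and realizing $\boldsymbol{s}$ as a long-run occupation measure forces $s_i/s_j = \mathbb{E}_{\textbf{\underline{j}}}[\tau_j]/\mathbb{E}_{\textbf{\underline{i}}}[\tau_i]$ for all $i,j$, so $\boldsymbol{s}$ is determined up to the normalizing constant.

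The abstract theory here is textbook, so the \textbf{main obstacle} is purely one of verification: making sure the chain of Definition~\ref{Markov Chain} genuinely meets the hypotheses. The two points that deserve care are (a) that $Q_n$ is honestly row-stochastic for every $n$, including the last row engineered to have a single unit entry, and (b) that the connectivity rule ``two states are adjacent iff their cardinalities differ by one, together with the all-cells$\to$empty edge'' really yields the irreducibility claimed in Property~\ref{irreducable} (every state is reachable from every other by $\mathcal{O}(n)$ single-cell additions/removals, routed if necessary through the empty or all-cells state). Once (a) and (b) are confirmed, existence and uniqueness follow immediately from the argument above, and the eigenvector $\boldsymbol{s}$ computed for $Q_2$ in Example~\ref{example: transition matrix} is exactly the distribution produced by this construction.
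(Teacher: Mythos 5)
Your proof is correct and follows essentially the same route as the paper, which simply cites the textbook fact that a finite-state Markov chain always admits a stationary distribution and that irreducibility (Property~\ref{irreducable}) makes it unique; you merely open that black box by supplying the standard return-time/Perron--Frobenius argument. Your observation that aperiodicity is not needed here also matches the paper, which treats convergence separately via the PageRank-style damping of $Q_n$.
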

\begin{proof}
According to~\cite{blitzstein2019introduction}, a stationary distribution exists for any finite-state Markov chain, and if the chain is irreducible, the solution is unique. Based on property~\ref{irreducable}, there exists a unique stationary distribution for the  model. Later in Section~\ref{subsection: Recursive Construction}, we present the recursive construction of matrix $Q_n$ and show that the cardinality of the null space of the matrix $\boldsymbol{s}(Q_n-I)$ is one. 
\end{proof}

The above theorem shows that there exists a unique stationary distribution for the proposed Markov model regardless of the initial probabilities of the cells; however, to reach the stationary distribution, the chain needs to be aperiodic as well as irreducible. Based on Property~\ref{periodic}, the proposed model is aperiodic. However, particular initial probabilities, including zero values, can result in periodic chains. To address this problem, we adopt a similar approach as the PageRank algorithm~\cite{sarma2013fast}, used to rank the relevance of webpages. Suppose that before moving to a new state on the chain, a coin is tossed with probability $\alpha$ of heads. If the result of the coin toss is heads, the state evolves using the transition matrix $Q$; otherwise, the system jumps to a state in a uniformly random distribution. The resulting transition matrix is represented as:

\begin{equation}
    O_n = \alpha Q + (1-\alpha)\dfrac{J_n}{2^n},
\end{equation}
where $J_n$ is a $2^n \times 2^n$ matrix of all ones. The recommended value~\cite{sarma2013fast} of $\alpha$ is $0.85$. It can be observed that all elements of $O_n$ are positive, and therefore, the aperiodicity of the chain is guaranteed. Hence, solving Eq.~\eqref{equ: stationary distribution} for $O_n$ has a solution leading to a stationary distribution ($\boldsymbol{s}$) as well as converging to the stationary distribution. Similarly, the $i^{th}$ element of the vector $\boldsymbol{s}$ for the new transition matrix $O_n$ indicates the significance of state $\textbf{\underline{i}}$, as it represents $p(X_m= \textbf{\underline{i}})$ for any large value of $m$. In the following, we consider that the transition matrix is aperiodic, and we use the matrix $Q_n$ as our reference.

\begin{exmp}\label{example: Google transition matrix}
Going back to Example~\ref{example: transition matrix}, the transition matrix is derived as 
\begin{equation}\nonumber
O_2=
\begin{bmatrix}
0.0375 & 0.2075 & 0.7175 & 0.0375\\
0.2075 & 0.0375 & 0.0375 & 0.7175\\
0.7175 & 0.0375 & 0.0375 & 0.2075\\
0.8875 & 0.0375 & 0.0375 & 0.0375
\end{bmatrix}.
\end{equation}

Solving Eq.~\eqref{equ: stationary distribution} for the matrix $O$ results in the eigenvector of $\boldsymbol{s} = [0.4111,\,  0.1074,\, 0.3171,\, 0.1644]$. Hence, the new probability of states are $p(\{\emptyset\}) = 0.4111$, $p(\{v_1\}) = 0.1074$, $p(\{v_2\}) =  0.3171$, $p(\{v_1,\, v_2\}) = 0.1644$. One can check the convergence by choosing a large enough value of $m$ and calculating $tO_2^m$ starting by an arbitrary PMF on $t$. As an example, if the vector $\boldsymbol{t} = [0.25, 0.25,0.25,0.25]$, then $\boldsymbol{t}O_2^{50}$ will result in $$[0.4111,\,  0.1074,\, 0.3171,\, 0.1644]$$ which is the stationary distribution vector~$\boldsymbol{s}$. 
\end{exmp}

\subsection{Dependence in both space and time}

In this section, we study how to capture correlation among alert cells over time by incorporating spatial distance between cells within the Markov model. We embed spatial correlations in the transition matrix while maintaining Markovian properties, and thus the long-term behaviour of the model can be better defined. We use as starting point the proposed model from Fig.~\ref{fig:markov model}.

Consider a grid with two cells $\{ v_1, v_2\}$ and the state space of $\mathcal{S}_2 = \{\textbf{\underline{1}} = \{\emptyset\}, \textbf{\underline{2}} = \{v_1\},\textbf{\underline{3}} = \{v_2\},\textbf{\underline{4}} = \{v_1,v_2\} \}$. The matrix $Q_2$ is derived as

\begin{equation}\nonumber
Q_2=
\begin{bmatrix}
0 & p(\{v_1\}) & p(\{v_2\})  & 0\\
p(\{v_1\}) & 0 & 0 & p(\{v_2\})\\
p(\{v_2\}) & 0 & 0 & p(\{v_1\})\\
1 & 0 & 0 & 0
\end{bmatrix}.
\end{equation}
Investigating the transition matrix closely, one can see the impact of independence between cells in the matrix. Consider the entry $Q_2(2,4)$ as an example. This entry indicates the probability of going from state $ \textbf{\underline{2}} = \{v_1\}$ to state $\textbf{\underline{4}} = \{v_1,v_2\}$ is $p(\{v_2\})$. In other words, the transition captures the fact that the existence of another alert zone cell $v_1$ did not impact the cell $v_2$ (i.e., spatial independence between cells). More formally, from the {\em Bayes rule}:

\begin{equation}
    p(\{v_1,v_2\}) = p(v_2| v_1)p(v_1) \rightarrow p(\{v_1,v_2\}) = p(v_2)p(v_1), 
\end{equation}
given that 
\begin{equation}
    p(v_2| v_1) = p(v_2).  
\end{equation}
In Section~\ref{Sec: Independence in Space, Dependence in Time}, we assumed independence between states. To address this issue, we propose the following method to capture the correlations between states without eliminating the Markov property of the matrix $Q_n$. The main idea behind the approach is that cells that are in close proximity to the alert zone are more likely to become part of the zone in the future.

Let $X_0, X_1,...,X_i,...$ be an order one Markov sequence of random variables modeling the occurrence of the alert zones, where $X_i$'s are defined over the state space of the grid. Without loss of generality assume that the $j'^{th}$ row of the matrix $Q$ corresponds to the state $\{v_1,v_2,...,v_j \}$. Based on the proposed Markov model in Fig.~\ref{fig:markov model}, it is known that this state can evolve by the addition or removal of a single alert cell. Therefore, there exist $n$ non-zero elements in each row of the matrix. For all $v_k \in \mathcal{V}$, we calculate the probability of its removal or addition as:

\begin{equation*}
\begin{aligned}
\textrm{If}\, v_k \notin &\{v_1,v_2,...,v_j \}\, \textrm{then};\\
& p(\{v_k\}\cup \{v_1,v_2,...,v_j \}) = p(v_k)/(d(v_k,c))\times \beta\\
\textrm{If}\, v_k \in &\{v_1,v_2,...,v_j \}\, \textrm{then};\\
& p( \{v_1,v_2,...,v_j \} -\{v_k\} ) = p(v_k)/(d(v_k,c))\times \beta,
\end{aligned}
\end{equation*}
where the function $d(.)$ returns the Euclidean distance between two points, $\beta$ is a normalization factor over the entire row, and the point $c$ is the centre point of  $\{v_1,v_2,...,v_j \}$, calculated as 
\begin{equation}
    c = (\sum_{i=1}^j v_j)/j.
\end{equation}

Note that, in all above calculations, each cell's center point is used as its representative. The intuition behind the approach is that the correlation between cells becomes smaller as we go further away from the alert zone. The only special case is when there exists a single-cell alert zone, and we seek the probability of its removal. In this case $d(v_k,c)$ becomes close to zero and $p(v_k)/(d(v_k,c))$ tends to go to infinity. As there exist no other alert zone cell for this case, we consider this probability as $p(v_k)$ instead of $p(v_k)/(d(v_k,c))$ to avoid inaccuracies. As an example, consider a grid with three cells $\{ v_1, v_2, v_3\}$ and the average point $c$. Suppose that the $j^{th}$ row of the matrix $Q_3$ corresponds to the state $\{ v_1, v_2\}$. In this row, there exist three nonzero elements: 

\begin{equation}
\begin{aligned}
& p(\{ v_1\})= p(v_2)/(d(v_2,c) ) \times \beta\\
&p(\{ v_2\}) = p(v_1)/(d(v_1,c)\times \beta\\
&p(\{ v_1, v_2, v_3\}) = p(v_3/(d(v_3,c))\times \beta, \textrm{where}\\
&\beta = 1/(p(v_2)/(d(v_2,c) ) + p(v_1)/d(v_1,c) + p(v_3/d(v_3,c)))
\end{aligned}
\end{equation}

The proposed method satisfies the Markovian matrix property. Hence, it can be used as part of the Markov model in Section~\ref{Sec: Independence in Space, Dependence in Time} to capture the long-term behavior of the system.

\subsection{Recursive Construction and Monte Carlo Sampling}\label{subsection: Recursive Construction}

Finding the eigenvector of matrix $Q_n$ corresponding to eigenvalue one is necessary to determine the probability of being in a particular state at a given time $p(X_n= \textbf{\underline{i}})$. The eigenvector provides valuable information that enables us to prioritize more likely states in the grid encoding process. However, there are two important issues with its calculation: {
em (i)} The matrix $Q_n$ has dimensions of $2^{n}\times 2^{n}$. Even considering a small grid with 100 cells, it requires an extremely large storage capacity. {\em (ii)} The calculation of the eigenvector for such a large matrix is expensive, with $\mathcal{O}(n^3)$~\cite{press1988numerical} complexity. For example, based on Householder transformations, eigenvalues and eigenvectors can be calculated with complexity $\mathcal{O}(n^2) + 4n^3/3$. To address the high computational overhead, we approximate the stationary distribution based on random walks on the Markov model.

We start by explaining the recursive construction of the matrix $Q_n$. The rows and columns of the matrix depend on the order in which states are chosen. We propose to construct the states of the $n+1$ cells, $v_1$ to $v_{n+1}$, from the grid with $n$ cells, $v_1$ to $v_n$ as follows:
\begin{align}
    &\mathcal{S}_2 = \{\{\emptyset\},\{v_1\},\{v_2\}, \{v_1,v_2\}\},\\
    &\mathcal{S}_{n+1}  = \{\mathcal{S}_n ,\, \mathcal{S}_{n}\bigcup v_{n+1}\}.
\end{align}
For instance, $\mathcal{S}_3$ is constructed as 
\begin{equation}
\begin{split}
    \mathcal{S}_3 = \{\{\emptyset\},\{v_1\},\{v_2\}&,\{v_1,v_2\},\\
    &\{v_3\},\{v_1,v_3\},\{v_2,v_3\}, \{v_1,v_2,v_3\}\},
\end{split}
\end{equation}
The matrix $Q_{n+1}$ can be constructed recursively as 
\begin{equation}
    Q_n=
    \begin{bmatrix}
    W_{n-1} & p(v_n)I_{2^{n-1}} \\
    p(v_n)I_{2^{n-1}}& W_{n-1}
    \end{bmatrix}
    -W_n(2^n,:) + K_{2^n},
\end{equation}
where $I_{2^{n}}$ is the identity matrix and $K_{2^n}$ is an all-zero $2^n\times 2^n$ matrix except for element $K_{2^n}(2^n,0) = 1$, and 
\begin{equation}
    W_n=
    \begin{bmatrix}
    W_{n-1} & p(v_n)I_{2^{n-1}} \\
    p(v_n)I_{2^{n-1}}& W_{n-1}
    \end{bmatrix}
\end{equation}

\noindent given that

\begin{equation}\nonumber
W_2=
\begin{bmatrix}
0 & p(v_1) & p(v_2) & 0\\
p(v_1) & 0 & 0 & p(v_2)\\
p(v_2) & 0 & 0 & p(v_1)\\
0 & p(v_2) & p(v_1) & 0
\end{bmatrix}.
\end{equation}

The above representation of $Q_n$ works under the spatial independence assumption, but the construction of states holds regardless of that assumption.

To tackle the high computational complexity of determining eigenvectors, we use a probabilistic approach. PageRank's approach~\cite{sarma2013fast} to this problem is incorporating the power iteration method to calculate the eigenvectors, but still incurs a high computational complexity. An alternative approach is the Monte Carlo approximation, which is widely used in literature and results in an enhanced estimation of the stationary distribution. The Monte Carlo method provides several advantages over deterministic power iteration methods such as significantly lower computation complexity, opportunities for parallel implementation, and it facilitates updating of probabilities.  

The main idea behind the Monte Carlo approximation is to start $R$ random walks on the Markov model's primary node, i.e., state $\textbf{\underline{1}}$. Each random walk terminates with the probability of $1-c$ and makes a transition to the next outgoing node with the PMF specified in the transition matrix $Q_n$. The fraction of walks ending at a state over all the random walks indicates the probability or significance of that state.  The vector of calculated probabilities for all states is the approximation of stationary distribution. The number of samples required to estimate the stationary distribution is shown to pessimistically be in the order $\mathcal{O}(n^2)$, where $n$ indicates the number of states; however, it is shown that $n$ random walks are enough to provide a reasonable approximation of a stationary distribution~\cite{avrachenkov2007monte}.

\section{Experimental Evaluation}

\begin{figure*}[!ht]

	\subfloat[a=0.75, b=10.\label{F11}]{%
	\includegraphics[scale=.53]{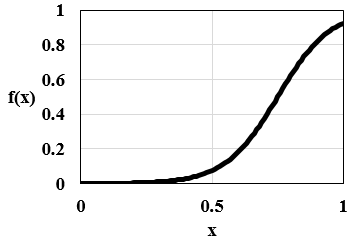}
	}
	\hfill
	\subfloat[a=0.75, b=10.\label{F12}]{%
	\includegraphics[scale=.53]{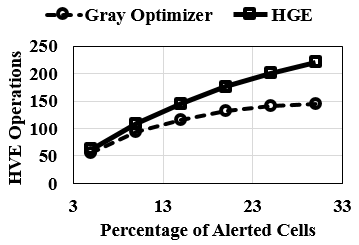}
	}
	\hfill
	\subfloat[a=0.75, b=10. \label{F13}.]{%
	\includegraphics[scale=.53]{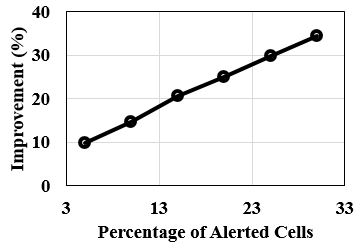}
	}
	\hfill
	\subfloat[a=0.75, b=30. \label{F21}]{%
	\includegraphics[scale=.53]{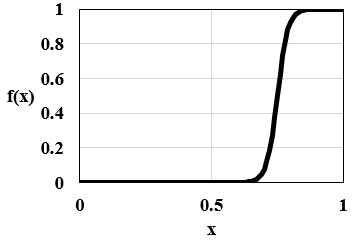}
	}
	\hfill
	\subfloat[a=0.75, b=30. \label{F22}]{%
	\includegraphics[scale=.53]{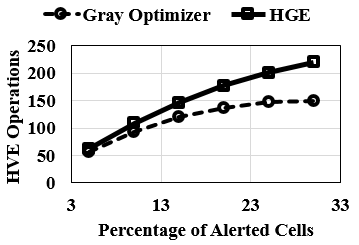}
	}
	\hfill
	\subfloat[a=0.75, b=30.\label{F23}]{%
	\includegraphics[scale=.53]{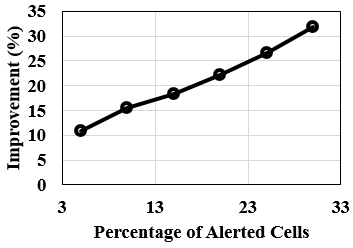}
	}
	\hfill
	\subfloat[a=0.5, b=10 (baseline performance).\label{F31}]{%
	\includegraphics[scale=.53]{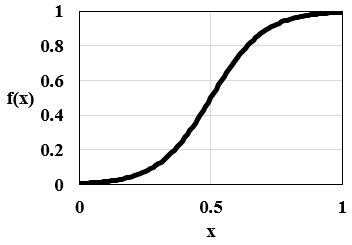}
	}
	\hfill
	\subfloat[a=0.5, b=10. \label{F32}]{%
	\includegraphics[scale=.53]{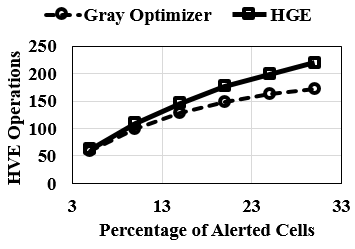}
	}
	\hfill
	\subfloat[a=0.5, b=10. \label{F33}]{%
	\includegraphics[scale=.53]{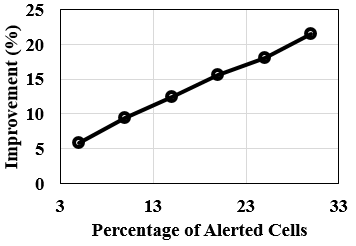}
	}
	\caption{Evaluation of GO, grid size = 100 cells.}
	\label{Fig_GrayOptimizer}
\end{figure*}

\subsection{Experimental Setup}\label{Initial Cell Probabilities}

We conduct our experiments on a $3.40$GHz core-i7 Intel processor with 8GB RAM running $64$-bit Windows $7$ OS. The code is implemented in Python, and we used the LogicMin Library~\cite{Minimization} for binary minimization of token expressions. We compare the proposed approaches (GO, MGSO and SGO) against the hierarchical Gray encoding technique from~\cite{ghinita2014efficient} (labeled {\em HGE}), the state-of-the-art in location alerts on HVE-encrypted data.

To model the probability of partition cells becoming alert zones, we use the sigmoid function $\mathcal{S}(x)= 1/(1+\exp^{-b(x-a)})$, where $a$ and $b$ are parameters controlling the function shape. The output value is between zero and one. The sigmoid function is a frequent model used in machine learning, and we chose it because in practice, the probability of individual cells becoming part of an alert zone can be computed using such a model built on a regions' map of features (e.g., type of terrain, building designation, point-of-sale information, etc). 
Parameter $a$ of the sigmoid controls the \textit{inflection} point of the curve, whereas $b$ controls the gradient.

\subsection{Gray Optimizer Evaluation}

GO is our core proposed algorithm to reduce the number of HVE operations required to support alert zones. Specifically, by \textit{HVE operations} we refer to the computation executed by the server to determine matches between tokens and encrypted user locations. Recall that, for each non-star item in a token, a number of expensive bilinear map operations are required. GO aims to minimize the number such non-star items in tokens by choosing an appropriate encoding of the domain.
Our comparison benchmark is the approach from~\cite{ghinita2014efficient} which uses a hierarchical quadtree structure to partition the data domain. We refer to this approach as \textit{HGE}, and we present our result as an improvement in terms of computation overhead compared with~\cite{ghinita2014efficient}.

\subsubsection{Improvement in HVE Operations}

Fig~\ref{Fig_GrayOptimizer} summarizes the evaluation results of GO for three logistic function parameter settings. The grid size is set to $100$ cells (recall from our earlier discussion that GO can only support relatively low granularities). Fig.~\ref{Fig_GrayOptimizer} shows the total number of bilinear pairings performed for a ciphertext-token pair. GO clearly outperforms the approach from~\cite{ghinita2014efficient}. The relative gain in performance of GO increases when the size of the alert zone increases (i.e., when there are more grid cells covered by the alert zone). This can be explained by the fact that a larger input set gives GO more flexibility to optimize the encoding and decrease the number of non-star entries in a token. In terms of percentage gains, GO can improve performance by up to 40\%, which is quite significant. Also, note that the gains are significant for all parameters of the sigmoid function used. In general, we identified that a higher $a$ value leads to more pronounced gains. This is an encouraging factor, because a higher $a$ corresponds to a more skewed probability case, where a relatively small number of cells are more likely to be included in an alert zone than others. In practice, one would expect that to be the case, since events that trigger alerts also tend to be concentrated over a relatively small area (e.g., very popular hotspots, certain facilities that present higher risks, like a chemical plant, etc.).

\begin{figure}[t]
\centering
\includegraphics[scale=.5]{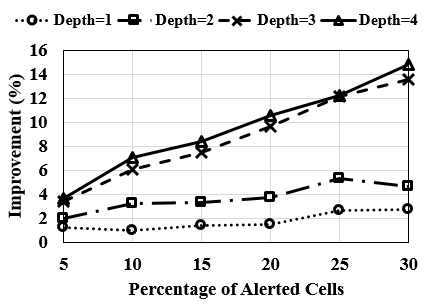}
\hspace{1em}
\centering
\caption{Performance evaluation of GO for varying depth (100 cells).}
\label{fig2}
\end{figure}

\begin{figure}[!t]
	\subfloat[Gray Optimizer. \label{fig4a}]{%
	\includegraphics[scale=.41]{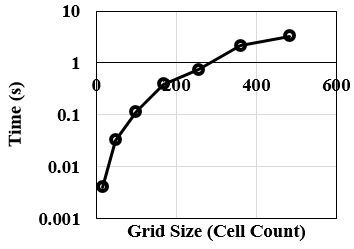}
	}
	\hfill
	\subfloat[MSGO.\label{fig4b}]{%
	\includegraphics[scale=.41]{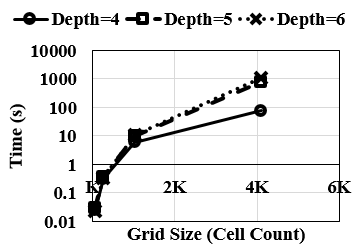}
	}
	\hfill
	\centering
	\subfloat[SGO.\label{fig4c}]{%
	\includegraphics[height=2.6cm, scale=.41]{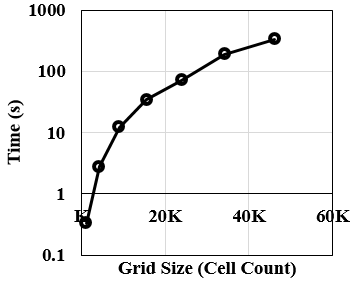}
	}
	\vspace{-10pt}
	\caption{Execution time.}
	\label{fig4}
\end{figure}

\subsubsection{Impact of Depth}

Recall that the reduction in computation achieved by GO depends on the depth at which the algorithm is run (GO works similar to a depth-first search graph algorithm).
In general, running the algorithm with a higher depth will produce better results in terms of performance gain at runtime (i.e., when matching is performed at the server), but it also requires a lot more computational time to compute a good encoding (which is a one-time cost). Fig.~\ref{fig2} captures the impact of depth on improvement. In this experiment, GO is executed on a single cell with different depths, and the remaining cells are assigned randomly (the experiment is specifically designed to show the effect of using lower depths on GO).
As expected, there is a clear increasing trend, with higher depths resulting in better improvement factors. However, after a sharp initial gain (illustrated by the large distance between the chart graphs corresponding to depths 2 and 3), the improvement stabilizes, and it may no longer be worth increasing the depth of the computation considerably (the gains are stabilizing between depths 3 and 4).

\subsubsection{Execution Time}

Fig.~\ref{fig4a} illustrates the execution time of GO. Recall that, the execution time of GO is influenced by the granularity of the grid (finer granularities increase execution time). The results show that GO can complete within a short execution time for smaller grid sizes; however, as the grid granularity increases,  there is a sharp increase in execution time. Therefore, GO may not be practical to apply for high granularity grids, and that is the main motivation behind our two variations, MSGO and SGO (which are evaluated next). Moreover, as the grid granularity increases, the length of cycles becomes larger, which will also result in numerical inaccuracies when executing GO. The execution time required by GO for values up to 600 cells is around 10 seconds. We observed that this value is the maximum number of cells for which GO performs reasonably; beyond this level, the algorithm is not suitable due to increased execution time and numerical inaccuracies associated with the calculation of probabilities for large cycles.

\begin{figure*}[!ht]

	\subfloat[a=0.75, b=10.\label{F5_11}]{%
	\includegraphics[scale=.53]{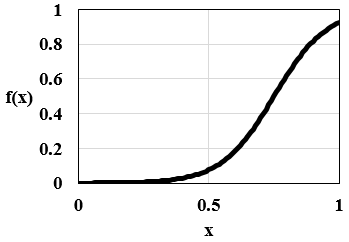}
	}
	\hfill
	\subfloat[a=0.75, b=10.\label{F5_12}]{%
	\includegraphics[scale=.53]{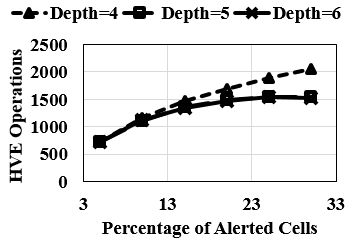}
	}
	\hfill
	\subfloat[a=0.75, b=10. \label{F5_13}]{%
	\includegraphics[scale=.53]{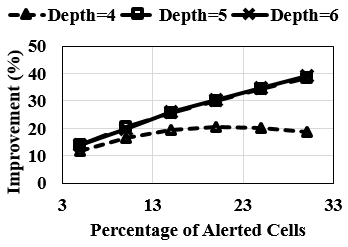}
	}
	\hfill
	\subfloat[a=0.75, b=30. \label{F5_21}]{%
	\includegraphics[scale=.53]{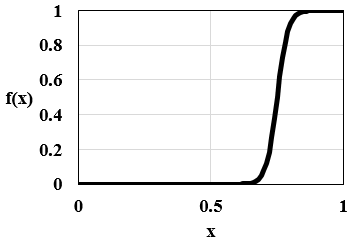}
	}
	\hfill
	\subfloat[a=0.75, b=30. \label{F5_22}]{%
	\includegraphics[scale=.53]{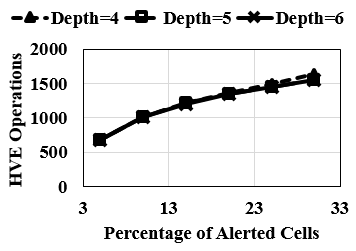}
	}
	\hfill
	\subfloat[a=0.75, b=30.\label{F5_23}]{%
	\includegraphics[scale=.53]{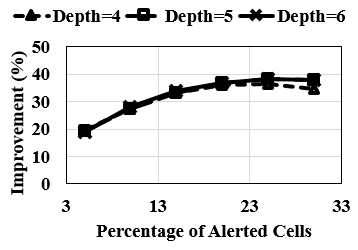}
	}
	\hfill
	\subfloat[a=0.5, b=10.\label{F5_31}]{%
	\includegraphics[scale=.53]{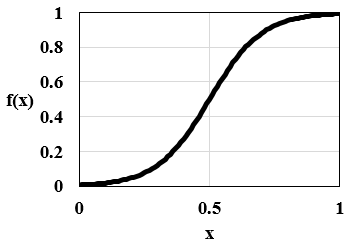}
	}
	\hfill
	\subfloat[a=0.5, b=10. \label{F5_32}]{%
	\includegraphics[scale=.53]{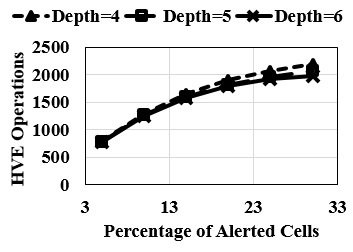}
	}
	\hfill
	\subfloat[a=0.5, b=10. \label{F5_33}]{%
	\includegraphics[scale=.53]{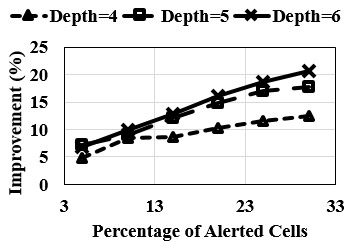}
	}
	\caption{Performance evaluation of MSGO (grid size = 1024 cells).}
	\label{fig_MSGO}
\end{figure*}

\subsection{Evaluation of GO Variations on Higher Granularity Grids}

As discussed previously, GO does not perform well when directly applied to high granularity grids. 
To improve the computational complexity of GO, we proposed two extensions of the algorithm, namely, MSGO and SGO. Next, we evaluate experimentally both these variations.

\subsubsection{MSGO}

\begin{figure*}[t]
	\subfloat[Grid size=10000 ]{%
	\includegraphics[scale=.57]{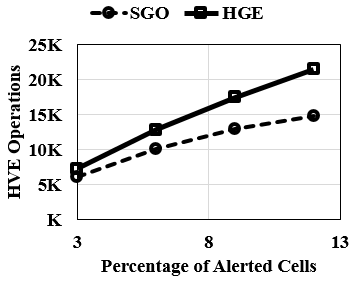}
	}
	\hfill
    \subfloat[Grid size=28900]{%
	\includegraphics[scale=.57]{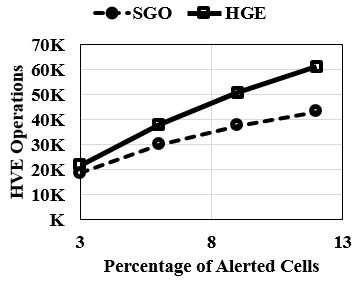}
	}
	\hfill
	\subfloat[Grid size=50625]{%
	\includegraphics[scale=.57]{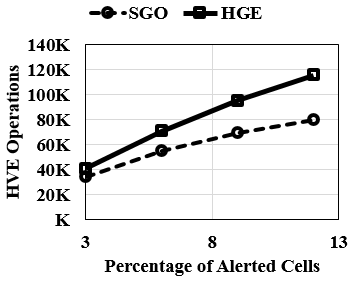}
	}
	\vspace{-10pt}
	\hfill
	\subfloat[Grid size=10000 ]{%
	\includegraphics[scale=.57]{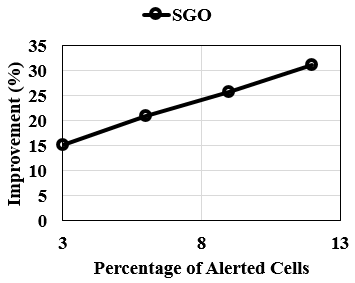}
	}
	\hfill
    \subfloat[Grid size=28900]{%
	\includegraphics[scale=.57]{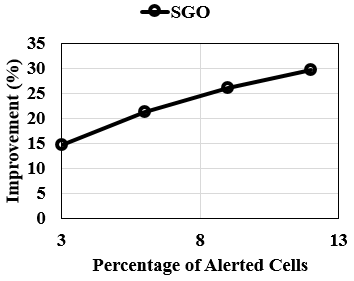}
	}
	\hfill
	\subfloat[Grid size=50625]{%
	\includegraphics[scale=.57]{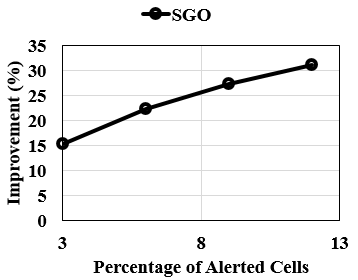}
	}
	\caption{SGO performance evaluation, varying grid size.}
	\label{fig6}
\end{figure*}	

\begin{figure*}[h!]
	\subfloat[GO]{%
	\includegraphics[scale=.57]{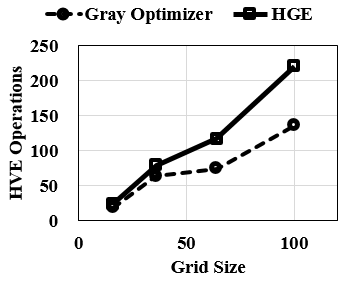}
	}
	\hfill
    \subfloat[MSGO, depth=4 ]{%
	\includegraphics[scale=.57]{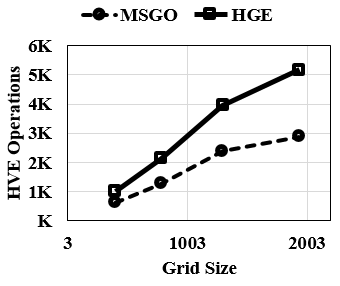}
	}
	\hfill
	\subfloat[SGO]{%
	\includegraphics[scale=.57]{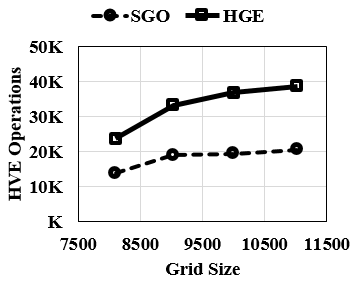}
	}
	\hfill
	\subfloat[GO]{%
	\includegraphics[scale=.57]{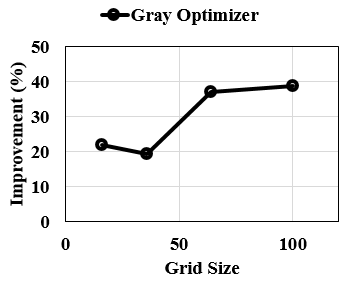}
	}
	\hfill
    \subfloat[MSGO, depth=4]{%
	\includegraphics[scale=.57]{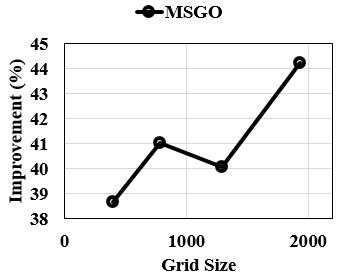}
	}
	\hfill
	\subfloat[SGO]{%
	\includegraphics[scale=.57]{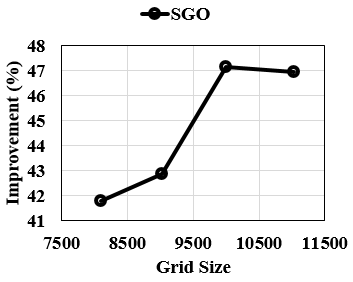}
	}
	\caption{Performance of algorithms, varying grid size, $30\%$ of cells on alert.}
	\label{Fig: Grid Size Variation}
\end{figure*}

Fig~\ref{fig_MSGO} illustrates the performance of MSGO compared to the HGE benchmark scheme from~\cite{ghinita2014efficient}. Unlike the single seed GO, we are able to evaluate the performance of MSGO for grids with much higher granularity (i.e., 1024 cells in this case). There is a similar trend in terms of gain as we have observed with GO, where larger alert zones provide more opportunities for advantageous encodings, and thus overall performance is improved (the percentage of HVE operations eliminated is higher). The relative gain obtained is very close to 50\% compared to the benchmark. Also, the absolute amount of improvement is better than for GO in all cases. This occurs due to the fact that MSGO can support higher-granularity grids, and in this setting there is more flexibility in choosing a good encoding (due to the larger number of cells, there are significantly more choices for our algorithm).  
As expected, increasing the depth of MSGO leads to higher improvement percentage, but the trade-off is a larger computation complexity.

Comparing Figs.~\ref{Fig_GrayOptimizer} and~\ref{fig_MSGO}, we remark that the MSGO algorithm obtains similar performance gains as the core algorithm GO for low granularity grids, but with a much lower computational overhead. For high granularity grids, GO cannot keep up in terms of computational overhead, whereas MSGO scales reasonably well, and it is able to still obtain significant improvements. One main reason is that MSGO no longer requires the calculation of probabilities of large cycles, avoiding numerical inaccuracies and reducing overall computational overhead. The complexity of the algorithm can be as low as $\mathcal{O}(n(\log_2n))$ depending on the chosen depth value, which provides a robust and efficient solution for reducing the number of HVE operations. 

The execution time of MSGO is illustrated in~Fig.~\ref{fig4b}. The graph indicates that even for a high level of granularity, such as $4,000$, the algorithm requires less than $15$ minutes to encode the grid, depending on the specified depth at the input. As expected, by increasing the depth of the algorithm, better performance can be achieved in terms of HVE operations, at the cost of higher computational overhead. The MSGO algorithm can be extended for an arbitrary number of cells on the grid, and also it may have various cluster sizes depending on the application.

\subsubsection{SGO}

Fig~\ref{fig6} illustrates the performance gain obtained by SGO. In this experiment, we focused on applying the algorithm to much larger number of cells, up to $50,625$ (which is equivalent to a $225 \times 225$ square grid). Similar to the MSGO algorithm, the improvement achieved by SGO occurs even when the alert zones are small. Since the overall number of cells is larger, the SGO algorithm has even more flexibility in choosing an advantageous encoding, resulting in strong performance gains. For example, at $9$\% ratio of alert cells, the SGO algorithm results in $25.8$, $26$, and $27.3$\% improvements for grid sizes of $10,000$, $28,900$, and $50,625$, respectively.

The execution time of SGO is shown in  Fig.~\ref{fig4c}. Even for very large grid sizes, such as $50,625$, the algorithm requires less than six minutes to encode the grid. Therefore, the system can be set to regularly update the probabilities and run the algorithm at six-minute intervals, if needed. To compare this time performance with GO, consider the maximum grid size for which the encoding can be computed within 60 seconds in each case. As shown in Fig.~\ref{fig4a}, this number corresponds to a grid size of $1200$ for GO, whereas in a similar time, SGO can be applied on the grid size of $22,000$ cells. Therefore, the SGO algorithm requires significantly lower computation overhead to execute compared with GO and even MSGO algorithms, while the performance gain in terms of HVE operations reductions is still solid.

Fig.~\ref{Fig: Grid Size Variation} presents the result of algorithms by fixing the percentage of alerted cells to $30\%$ and varying the grid size. It can be seen that the performance improvement of algorithms stays in a comparable margin for varying grid sizes. The slight fluctuation in graphs is due to two primary reasons (i) as all codewords have the same length, increasing the quantization level result in an addition of a bit to all codewords, and (ii) the initial probabilities are assigned to the cells in a random process based on the sigmoid activation function. 

\begin{figure*}[h!]
	\subfloat[Grid size=100 \label{F6a}]{%
	\includegraphics[scale=.60]{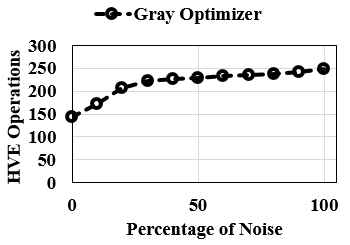}
	}
	\hfill
    \subfloat[Grid size=1024 \label{F6e}]{%
	\includegraphics[scale=.60]{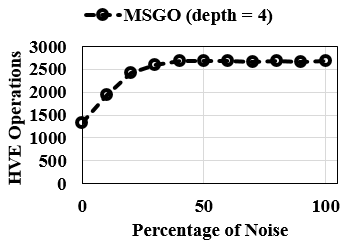}
	}
	\hfill
	\subfloat[Grid size=10000 \label{F6c}]{%
	\includegraphics[scale=.60]{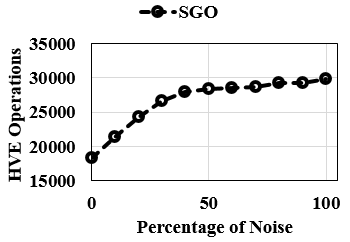}
	}
	\hfill
	\subfloat[Grid size=100 \label{F6d}]{%
	\includegraphics[scale=.60]{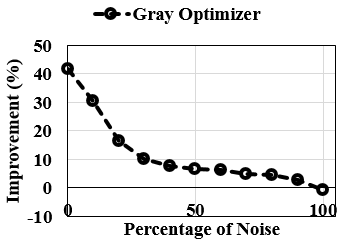}
	}
	\hfill
    \subfloat[Grid size=1024\label{F6b}]{%
	\includegraphics[scale=.60]{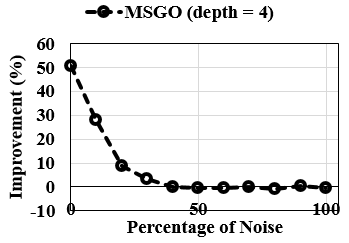}
	}
	\hfill
	\subfloat[Grid size=10000 \label{F6f}]{%
	\includegraphics[scale=.60]{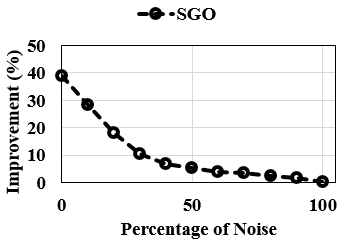}
	}
	\caption{Sensitivity analysis of algorithms, $40\%$ alerted cells.}
	\label{Fig: Sensitivity analysis}
\end{figure*}

\begin{figure*}[h]
	\subfloat[GO, grid size=100 ]{%
	\includegraphics[scale=.59]{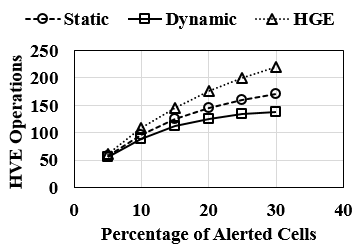}
	}
	\hfill
    \subfloat[MSGO, depth=4, grid size=1024 ]{%
	\includegraphics[scale=.59]{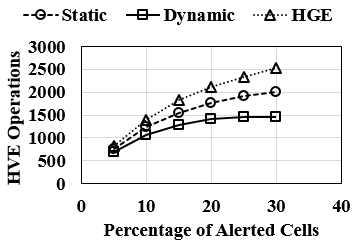}
	}
	\hfill
	\subfloat[SGO, grid size=10000 ]{%
	\includegraphics[scale=.59]{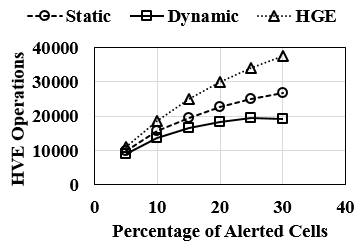}
	}
	\hfill
	\subfloat[GO, grid size=100 ]{%
	\includegraphics[scale=.60]{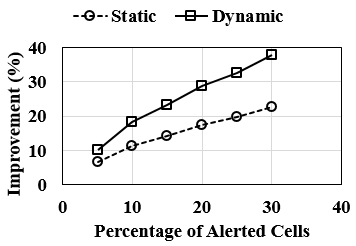}
	}
	\hfill
    \subfloat[MSGO, depth=4, grid size=1024 ]{%
	\includegraphics[scale=.60]{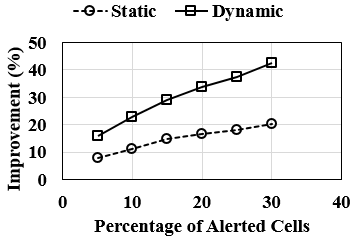}
	}
	\hfill
	\subfloat[SGO, grid size=10000 ]{%
	\includegraphics[scale=.60]{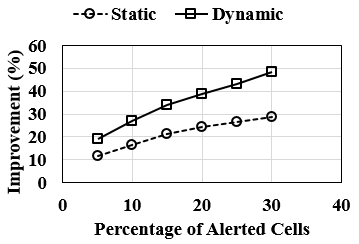}
	}
	\caption{Markov model versus static approach.}
	\label{Fig: dynamic approach}
\end{figure*}

\subsection{Imperfect Probabilities Information}

The knowledge of cell probabilities plays an important role in the reduction of HVE operations. These probabilities are input to GO and its extensions SGO and MSGO, used to find an enhanced encoding of space. Having imperfect initial cell probabilities can negatively impact the performance of algorithms by deviating the optimization result. Therefore, we aim at investigating the effect of imperfect initial probabilities on the improvement achieved compared with the previous work (HGE). This is done by the addition of noise to cell probabilities at the input of algorithms modeling the inaccuracies that might exist. Let us briefly illustrate how the addition of noise is conducted. Given the vector of cell probabilities:
$$\textrm{Probability of alert} = [p(v_1),\,p(v_2),\,...,\, p(v_n)],$$
each probability is added with an iid uniformly distributed random noise $n$ between $[0,u]$, where $u$ indicates the maximum noise value. For example, if the percentage of noise is $20\%$, this value is set to $0.2$, and the random noise is generated uniformly in the interval of $[0,0.2]$. Doing so, the transformed probabilities are acquired as 
$$[p(v_1)',\,p(v_2)',\,...,\, p(v_n)'],\, \textrm{where}\, \,p(v_i)' = \,p(v_i)+n_i. $$
Note that the values are considered to be cyclic between zero and one, i.e., if the noise of $0.5$ is added with a cell probability of $0.8$, the resulting value would be recorded as $0.3$. Hence, with $100\%$ of noise, it is expected that the numbers would be uniformly distributed.

Fig.~\ref{Fig: Sensitivity analysis} indicates the sensitivity of GO, MSGO, and SGO to imperfect probability values used as input. For each algorithm, the number of HVE operations required is shown as well as the improvement gained in the performance compared with the previous work. The $x$-axis represents the percentage of noise added to the perfect probability information varied between 0 to 100, and the $y$-axis indicates HVE operations required side by side to the improvement achieved. The percentage of alerted cells is set to $40\%$ in all graphs. 

The overall trend of reduction in the performance improvement by the addition of noise is consistent across all three algorithms. The improvement gained from algorithms stays at its highest when there exists no amount of noise at the input. The figure gradually drops as more noise is applied between zero to $50\%$, after which the performance improvement becomes almost negligible. 

As expected, in the case of maximum noise, no information is available regarding probabilities, and therefore, no further gain could be made with respect to HGE. Hence, at $100\%$ of noise, the number of HVE operations required from all algorithms converges to the HVE operations of HGE. The rate of sensitivity to imperfect information varies among algorithms. Looking at $10\%$ of noise, it can be seen that the drop in MSGO performance occurs at a higher rate than the other two algorithms with GO and SGO indicating $25\%$ loss in the performance against the loss of $40\%$ for the MSGO algorithm. Overall, MSGO shows a higher sensitivity compared with the GO and SGO algorithms.

\subsection{Dynamic Alert Zones}

So far, we evaluated techniques for static alert zones. Next, we measure the performance of our proposed technique for dynamic alert zones introduced in Section~\ref{Advance Modeling of Alert Zones}.

Fig.~\ref{Fig: dynamic approach} investigates the performance gain acquired by applying the proposed Markov model. The random path approach (Monte Carlo sampling) is used as the underlying method to compute the transition matrix's stationary distribution, minimizing the induced computation complexity on the system. The $x$-axis of the graphs shows the percentage of alert cells, and the $y$-axis represents the percentage of improvement as well as the number of HVE operations required. To distinguish between the two modeling approaches, performance improvement achieved by incorporating the Markov model is labeled as {\em dynamic}, and the scenario in which the time dependence is not considered is referred to as {\em static}. The experiment is designed by initializing both static and dynamic approaches with the same set of initial probabilities; however, the system would continue evolving in a uniformly distributed manner. Therefore, if there are $m$ outgoing edges from a state of the model, the corresponding probability is set to $1/m$. The aim is to see if the Markov model is able to capture the evolution of the system and how much improvement can be achieved with the gained information. As before, the value of the $a$ and $b$ are set to $0.75$ and $10$ with the termination probability of $0.4$.

Fig.~\ref{Fig: dynamic approach} shows that the dynamic method can predict well the evolution of alert zones, as the resulting encoding requires far fewer HVE operations. The performance gain achieved for all three of the algorithms is significant. The percentage of improvement is approximately $35\%$ to $50\%$, indicating more impact on GO compared to MSGO and SGO. 

\section{Related Work}
\subsection{Location Privacy.} Preserving the privacy of users in communication networks and online platforms has been one of the most challenging research problems in the past two decades. In the widely accepted scenario, users provide their location to service providers in exchange for location-based services they offer. The goal is to provide the service without user privacy being compromised by any of the parties involved. Early works to tackle this problem were focused on hiding or obfuscating user locations to achieve a privacy metric termed as $k$-anonymity. The location of a user is said to be $k$-anonymous if it is not distinguishable from at least $k-1$ other queried locations~\cite{sweeney2002k}.

In~\cite{kido2005anonymous}, the authors aim to provide $k$-anonymity by hiding the location of user among $k-1$ fake locations and requesting for desired services for all $k$ locations at the same time. The generation of such dummy locations based on a virtual grid or circle was considered in~\cite{niu2014privacy}. The authors in~\cite{niu2014achieving} conducted the selection of dummy locations predicated on the number of queries made on the map and aimed at increasing the entropy of $k$ locations in each set. In~\cite{cheng2006preserving}, random regions that enclose the user locations were introduced to bring uncertainty in the authentication of user locations. Unfortunately, fake locations can be revealed particularly in trajectories and with the existence of prior knowledge about the map and users.

Later on, approaches based on {\em Cloaking Regions (CRs)} proposed by~\cite{gruteser2003anonymous} gained momentum in the literature. The principal idea behind this method is to use a trusted anonymizer that clusters $k$ real user locations and query the area they are enclosed by to retrieve points of interest. Doing so, CRs aim to achieve $k$-anonymity for users and preserve their privacy. This approach is partially effective when snapshots of trajectories are considered, but once users are seen in trajectories, their location privacy would be severely at risk~\cite{shaham2020privacy}. Even for individual snapshots, it must be noted that a coarse area of real locations is released to the service provider, which could threaten the location privacy of users. Moreover, the CR-based approaches are susceptible to inference attacks predicated on the background knowledge or so-called side information. One such side information is the knowledge about the number of queries made on different locations of the map~\cite{niu2014achieving}.  

More recently, a model for privacy preservation in statistical databases termed as {\em differential privacy} was developed in~\cite{dwork2006calibrating}. The metric provides a promising prospect for aggregate queries; however, it is not suitable for private retrieval of specific data from datasets. Closer to HVE approach, a private information protocol was proposed in~\cite{ghinita2008private}. The PIR technique is based on cryptography and shown to be secure for private retrieval of information. Despite the promising results, there exists an assumption behind PIR approach that the user already knows about the points of interest. Therefore, PIR is not suitable for location-based alert systems as users are not aware of alert zone whereabouts.

\subsection{Searchable Encryption.} Originated from works such as~\cite{song2000practical}, the concept of search encryption was proposed to provide a secure cryptographic search of keywords.  Initially, only the exact matches of keywords were supported and later on the approach was extended for comparison queries in~\cite{boneh2006fully}, and to subset queries and conjunctions of equality in~\cite{boneh2007conjunctive}. The authors in~\cite{boneh2007conjunctive} also proposed the concept of HVE, used as the underlying tool to provide a secure location-based alert system. This approach and its extension in~\cite{blundo2009private} preserves the privacy of encrypted messages and tokens with the overhead of high computational complexity. The authors in~\cite{ghinita2014efficient} introduced and adopted the HVE for location-based alert systems, conducting the predicate match at a trusted provider, preserving the privacy of encrypted messages as well as tokens. Despite the promising results of the approach for privacy preservation in location-based alert systems, further reduction of computational overhead is necessary to increase the practicality.

\begin{table*}[th]
\caption{Summary of the proposed algorithms.}
\centering
\begin{tabular}{|>{\centering\arraybackslash}m{1.6cm} || >{\centering\arraybackslash}m{1.6cm} | >{\centering\arraybackslash}m{2.5 cm} |>{\centering\arraybackslash}m{2.7cm} |>{\centering\arraybackslash}m{1.2cm} |>{\centering\arraybackslash}m{5cm} |}
 \hline Algorithms & Possible granularity& Complexity & Recommended depth & Number of seeds& Application\\
 \hline Gray Optimizer & Low &  $\mathcal{O}(2n(\log_2n)) + n^{log_23} - 2n+1$ & $|\log_2n|$ & one &The core algorithm used in MSGO and SGO, which is suitable for low granularity grids due to high complexity.\\
 \hline MSGO & Medium to high&  The minimum of $\mathcal{O}(n(\log_2n))$ depending on the depth variable& Depth=4 when the clusters are of equal length, otherwise it depends on the application & Equal to number of clusters &Advantageous in scenarios where the optimization is required for particular subsets of the grid. For example, the service is needed for two major non-overlapping organizations.\\
 \hline SGO  & High &  $\mathcal{O}(n(\log_2n))$ &One & $n$ &SGO requires the lowest complexity, and it is advantageous in scenarios where the aim is to optimize the number of HVE operations for all the cells with equal significance.\\
 \hline
\end{tabular}
\label{t1}
\end{table*}

\section{Conclusion}

We proposed a family of techniques to reduce the computational overhead of HVE predicate evaluation in location-based alert systems. Specifically, we used graph embeddings to find advantageous domain space encodings that help reduce the required number of expensive HVE operations. Our heuristic solutions provide a significant improvement in computation compared to existing work, and they can scale to domain partitionings of fine granularity. In addition, we studied how to extend these techniques to work for the challenging setting of dynamic alert zones. Table~\ref{t1} summarizes the properties of the proposed approaches.

In future work, we will focus on deriving cost models and strategies to reduce the HVE overhead based on workload-specific requirements. Certain families of tasks may exhibit specific patterns of operations, which can be taken into account to optimize HVE matching performance, as well as to re-use computation. We will also investigate extending the graph embedding approach to other types of searchable encryption, beyond HVE (e.g., Inner Product Evaluation), which exhibit different types of internal algebraic operations. 

{\bf Acknowledgment.} This research has been funded in part by NSF grants IIS-1910950 and  IIS-1909806, the USC Integrated Media Systems Center (IMSC), and unrestricted cash gifts from Microsoft and Google . Any opinions, findings, and conclusions or recommendations expressed in this material are those of the author(s) and do not necessarily reflect the views of any of the sponsors such as the National Science Foundation.

\bibliographystyle{abbrv} 
\bibliography{Ref_HVE}   

\appendix
\section{Primer on HVE Encryption}
\label{sec:app}

{\em Hidden Vector Encryption (HVE)} \cite{boneh2007conjunctive} is a searchable encryption system that supports predicates in the form of conjunctive equality, range and subset queries. Search on ciphertexts can be performed with respect to a number of {\em index attributes}. HVE represents an attribute as a bit vector (each element has value $0$ or $1$), and the search predicate as a {\em pattern} vector where each element can be $0$, $1$ or '*' that signifies a wildcard (or ``don't care'') value. Let $l$ denote the HVE {\em width}, which is the bit length of the attribute, and consequently that of the search predicate. A predicate evaluates to $True$ for a ciphertext $C$ if the attribute vector $I$ used to encrypt $C$ has the same values as the pattern vector of the predicate in all positions that are not '*' in the latter. Fig.~\ref{fig:math-nonmatch} illustrates the two cases of {\em Match} and {\em Non-Match} for HVE.

HVE is built on top of a symmetrical bilinear map of composite order \cite{boneh2007conjunctive}, which is a function $e : \mathbb{G} \times \mathbb{G} \rightarrow \mathbb{G}_T$ such that $\forall a,b \in G$ and $ \forall u,v \in \mathbb{Z}$ it holds that $e(a^u,b^v)=e(a,b)^{uv}$. $\mathbb{G}$ and $\mathbb{G}_T$ are cyclic multiplicative groups of composite order $N=P\cdot Q$ where $P$ and $Q$ are large primes of equal bit length. We denote by $\mathbb{G}_p$, $\mathbb{G}_q$ the subgroups of $\mathbb{G}$ of orders $P$ and $Q$, respectively. Let $l$ denote the HVE {\em width}, which is the bit length of the attribute, and consequently that of the search predicate. HVE consists of the following phases:

{\bf Setup.} The $TA$ generates the public/secret ($PK$/$SK$) key pair and shares $PK$ with the users. $SK$ has the form:
$$SK = ( g_q \in \mathbb{G}_q,\quad a \in \mathbb{Z}_p,\quad \forall i \in [1..l]: u_i,h_i, w_i, g, v \in \mathbb{G}_p )$$
To generate $PK$, the $TA$ first chooses at random elements \(R_{u,i}, R_{h,i}\), \(R_{w,i} 
\in \mathbb{G}_q, \forall i \in [1..l]\)  and \(R_v \in \mathbb{G}_q\). Next, $PK$ is determined as:
$$PK = (g_q,\quad V=vR_v,\quad A=e(g,v)^a,\quad$$
$$\forall i \in [1..l]: U_i=u_iR_{u,i},\quad  H_i=h_iR_{h,i},\quad  W_i=w_iR_{w,i})$$

{\bf Encryption} uses $PK$ and takes as parameters index attribute $I$ and message $M \in \mathbb{G}_T$. The following random elements are generated: \(Z, Z_{i,1}, Z_{i,2} \in \mathbb{G}_q\) and \(s \in \mathbb{Z}_n\). Then, the ciphertext is: 
$$C = (C^{'}= MA^s,\quad C_0=V^sZ, \quad $$
$$\forall i \in [1..l]: C_{i,1} = (U^{I_i}_iH_i)^sZ_{i,1}, \quad C_{i,2} = W^{s}_iZ_{i,2} )$$

{\bf Token Generation.} Using $SK$, and given a search predicate encoded as pattern vector \(I_{*}\), the TA generates 
a search token $TK$ as follows: let \(J\) be the set of all indices $i$ where \(I_{*}[i] \neq *\).
TA randomly generates \(r_{i,1}\) and \(r_{i,2} \in \mathbb{Z}_p, \forall i \in J\). 
Then
$$TK=(I_*, K_0 = g^a\prod_{i \in J}(u^{I_{*}[i]}_ih_i)^{r_{i,1}}w^{r_{i,2}}_i, \quad$$
$$ \forall i \in [1..l]: K_{i,1} = v^{r_i,1},\quad K_{i,2} = v^{r_i,2})$$

{\bf Query} is executed at the server, and evaluates if the predicate represented by $TK$ holds for ciphertext $C$. The server attempts to determine the value of \(M\) as 
\begin{equation}
M = C^{'}{/} (e(C_0,K_0) {/} \prod_{i \in J} e(C_{i,1},K_{i,1}) e(C_{i,2},K_{i,2}) \label{eq:query}
\end{equation}
If the index $I$ based on which $C$ was computed satisfies $TK$, then the actual value of \(M\) is returned, otherwise a special number which is not in the valid message domain (denoted by $\bot$) is obtained.

\end{document}